\newtheorem{theorem}{Theorem}
\newtheorem{lemma}{Lemma}
\newtheorem*{theorem*}{Theorem}
\newtheorem{definition}{Definition}
\newtheorem{example}{Example}
\newtheoremstyle{operation}
  {\topsep}
  {\topsep}
  {}
  {}
  {\itshape}
  {.}
  {.5em}
  {\thmname{#1}\thmnumber{ #2}\thmnote{ (#3)}}
\theoremstyle{operation}
\newtheorem{operation}{Operation}
\def\>{\rangle}
\def\<{\langle}
\newcounter{protocol}
\renewcommand{\theprotocol}{\arabic{protocol}}
\newenvironment{protocol}[1]{%
  \par\addvspace{\topsep}%
  \refstepcounter{protocol}%
  \noindent\rule{\linewidth}{0.4pt}\par
  \vspace{0.2em}%
  \noindent\textbf{Protocol~\theprotocol.} #1\par
  \noindent\rule{\linewidth}{0.4pt}\par
  \ignorespaces
}{%
  \noindent\rule{\linewidth}{0.4pt}\par
  \addvspace{\topsep}%
  \ignorespacesafterend
}
\begin{document}

\title{When Quantum Nonlocality Does Not Play Dice}
\author{Ravishankar Ramanathan}
\affiliation{School of Computing and Data Science, The University of Hong Kong, Pokfulam Road, Hong Kong}

\author{Yuan Liu}
\affiliation{School of Computing and Data Science, The University of Hong Kong, Pokfulam Road, Hong Kong}

\author{Stefano Pironio}
\affiliation{Laboratoire d'Information Quantique, Universit\'{e} libre de Bruxelles (ULB), Belgium}	
	
	
	
	\begin{abstract}
	Bell nonlocality is widely viewed as a signature of intrinsic randomness, effectively playing the role of a “dice” at the heart of many device-independent cryptographic protocols. We show that this connection has a fundamental limitation: there exist Bell inequalities that are maximally violated by quantum correlations yet certify no randomness for any fixed input pair. We develop a systematic construction based on symmetric deterministic extensions of nonlocal games, and use it to obtain explicit examples of such inequalities. We also construct maximally nonlocal quantum correlations that, for every input pair, admit a convex decomposition into strategies with predetermined outputs for those inputs. Our results reveal a strong form of determinism compatible with quantum nonlocality and delineate the limits of device-independent randomness certification.
\end{abstract}
	
	\maketitle
	

\textbf{Introduction.}
Nonlocality, witnessed by the violation of Bell inequalities, implies that the correlations $p = \{p(ab | xy)\} $ in a Bell experiment cannot be decomposed as a convex mixture of local deterministic models,
\begin{equation}\label{eq:lhv}
p(ab | xy) \;=\; \sum_{\lambda} q_{\lambda}\,\delta_{a,\alpha(x,\lambda)}\,\delta_{b,\beta(y,\lambda)},
\end{equation}  
where the outcomes $ a=\alpha(x,\lambda) $ and $ b=\beta(y,\lambda) $ are fully predetermined by the local settings $x$, $y$, and a hidden variable $\lambda$ \cite{Bell,RMPBellnonlocality}. Although nonlocality does not rule out more general deterministic models -- where outcomes also depend on distant measurement setting, i.e., $ a=\alpha(x,y,\lambda) $ and $ b=\beta(x,y,\lambda)$ -- such models violate no-signaling at the level of individual hidden variables, even if they respect it upon averaging over $\lambda$. Consequently, within any theory satisfying no-signaling -- such as quantum theory -- fully deterministic models must take the local form~\eqref{eq:lhv}, and are thus incompatible with nonlocality. The violation of a Bell inequality thus serves as a certificate of randomness. This carries not only profound foundational implications, but also practical utility: nonlocal correlations enable device-independent (DI) protocols for randomness certification \cite{PironioNature,KC,AM16} and secure quantum key distribution (QKD) \cite{BarrettPRL,DIQKD}, requiring no assumptions about the internal functioning of the devices.  

In this paper, we revisit this fundamental link between nonlocality and randomness, which is often taken for granted, and explore its limits. We show that there exist Bell inequalities that are maximally violated by quantum theory, yet fail to certify randomness for any given measurement pairs $(\hat x, \hat y)$. Going further, we show that there exist quantum correlations that are maximally nonlocal, yet from which no randomness can be certified, i.e., they do not violate any randomness-certifying Bell inequalities. for any choice of inputs $(\hat x,\hat y)$.

\textbf{Preliminaries.}
To state our results precisely, we first introduce some basic notions and draw a distinction between  the notions of a device-independent witness of randomness and a Bell inequality.

Consider a Bell experiment with correlations $p$ and let $(\hat x, \hat y)$ be a chosen pair of measurements for which we wish to certify randomness. This is, for instance, the central scenario in DI randomness generation and QKD protocols with a  spot-checking structure: random bits are extracted from a fixed input pair $(\hat x, \hat y)$ used in the majority of rounds; while other inputs are sampled less frequently to verify the Bell violation. This approach was introduced in \cite{AcinMassarPironio2006} for DIQKD and in \cite{PironioNature} for DIQRNG.

If the correlations take the full deterministic form \eqref{eq:lhv}, then no randomness can be inferred from the the outcomes of $(\hat x, \hat y)$, as from any other input pair. Bell inequality violation is therefore necessary for device-independent randomness certification. It is not, however, sufficient. Within quantum theory, one can consider a set $\mathcal{Q}_{\hat x \hat y}$ of correlations that are 
simultaneously nonlocal and deterministic for the specific input pair $(\hat{x}, \hat{y})$: 
  \begin{align}\label{eq:partly-deterministic-quantum}
  p \in \mathcal{Q}_{\hat x \hat y} \text{ if }& p = \sum_{\lambda} q_\lambda p_\lambda \text{ where} p_\lambda\in \mathcal{Q}\,,\nonumber \\
  &\text{and }p_\lambda(ab|\hat x \hat y) = \delta_{a,\alpha_\lambda}\delta_{b,\beta_\lambda}\,.
\end{align}
That is, $p$ decomposes into quantum strategies $p_\lambda$ in which the outcomes $a,b$ are fully determined by $\lambda$ for the input pair $(\hat x, \hat y)$, while outputs for other inputs may be random. The set $\mathcal{Q}_{\hat x \hat y}$ is intermediate between the local set $\mathcal{L}$ and the full quantum set $\mathcal{Q}$: $\mathcal{L}\subseteq \mathcal{Q}_{\hat x \hat y} \subseteq \mathcal{Q}$.

The sets $\mathcal{L}$, $\mathcal{Q}_{\hat x \hat y}$, and $\mathcal{Q}$ are all convex sets and thus fully characterized by linear inequalities. For a Bell expression $\omega[p] = \sum_{abxy} \omega_{abxy} p(ab|xy)$, let $L$, $Q_{\hat x\hat y}$, and $Q$ denote its maximum over $\mathcal{L}$, $\mathcal{Q}_{\hat x \hat y}$, and $\mathcal{Q}$, respectively. If $\omega[p]>{Q}_{\hat x \hat y}$, then $p \notin \mathcal{Q}_{\hat{x}\hat{y}}$, 
meaning the outcomes for $(\hat{x}, \hat{y})$ cannot have been predetermined within quantum 
theory: the inequality $\omega[p] \leq Q_{\hat{x}\hat{y}}$ thus acts as a witness of randomness 
for that input pair. Three regimes arise.
\emph{(i)} ${L} = {Q}_{\hat x \hat y}<{Q}$: any violation of the Bell inequality 
$\omega[p] \leq L$ certifies randomness for $(\hat{x}, \hat{y})$. The CHSH inequality 
is a paradigmatic example, as its violation certifies randomness for all input pairs. \emph{(ii)} ${L} < {Q}_{\hat x \hat y} < {Q}$: randomness is certified only if the violation 
exceeds $Q_{\hat{x}\hat{y}}$; violations in the range $]L, Q_{\hat{x}\hat{y}}]$ 
witness nonlocality but not randomness. \emph{(iii)} ${L} < {Q}_{\hat x \hat y} = {Q}$, the Bell inequality is never a witness of 
randomness for $(\hat{x}, \hat{y})$, since its maximal quantum violation can be achieved 
by correlations that are deterministic for that input pair.

An analogous construction applies if one assumes only the no-signaling principle, rather than the full constraints of quantum theory. One can define the set $\mathcal{N\!S}_{\hat{x}\hat{y}}$ of no-signaling correlations that are deterministic for $(\hat{x}, \hat{y})$ by replacing the condition $p_\lambda \in \mathcal{Q}$ in~\eqref{eq:partly-deterministic-quantum} with $p_\lambda \in \mathcal{N\!S}$, where $\mathcal{N\!S}$ is the set of no-signaling correlations. The inequalities $\omega[p] \leq N\!S_{\hat{x}\hat{y}}$, where $N\!S_{\hat{x}\hat{y}}$ is the maximum of $\beta$ over $\mathcal{N\!S}_{\hat{x}\hat{y}}$, then act as witnesses of randomness 
for $(\hat{x}, \hat{y})$ within the no-signaling framework. The concept of no-signaling yet partly deterministic correlations was first introduced in \cite{Woodhead14} and has since found applications in the context of the Local Friendliness no-go theorem \cite{Bong2020}; see \cite{Haddara2025} for further discussion and extensions. The quantum analogue 
$\mathcal{Q}_{\hat{x}\hat{y}}$ introduced here does not appear to have been previously considered.

Examples of inequalities satisfying \emph{(ii)} or \emph{(iii)} are easily constructed in both the quantum and no-signaling frameworks. Given any Bell scenario exhibiting nonlocality, one can extend it by appending an additional input pair $(\hat{x}, \hat{y})$ that is decoupled from the rest. Since the violation is entirely determined by the original inputs, the outcomes for  $(\hat{x}, \hat{y})$ can be made deterministic without affecting it. This shows that a Bell inequality violation is a strictly weaker requirement than device-independent randomness certification for a given input pair.

\textbf{Bell inequalities that fail to certify randomness.}
The above observations raise a much more fundamental question: do there exist Bell inequalities that are violated, yet fail to certify randomness for \emph{any} input pair? We term these \emph{ECID} Bell inequalities for "Every Chosen Input Deterministic" -- a name that also mirrors "DICE" in reverse, signifying their inability to certify randomness.  
Formally, for a given Bell expression $\omega[p]$, we define
\begin{equation}
  Q_{\mathrm{ECID}}  = \min_{\hat x, \hat y} Q_{\hat x \hat y}\,.
\end{equation}
The bound $Q_{\mathrm{ECID}}$ represents the threshold below which no randomness can be certified for any input pair: if $\omega[p]\leq Q_{\mathrm{ECID}}$, then none of the randomness witnesses $\omega[p]\leq Q_{\hat x \hat y}$ are violated. Conversely if $\omega[p]>Q_{\mathrm{ECID}}$, one of these witnesse is violated and randomness is certified for \emph{at least} one input pair. For most (and to our knowledge, all previously known) Bell expression $Q_{\mathrm{ECID}} = L$, meaning that any violation of the Bell inequality automotically certifies some randomness. This motivates the search for Bell expressions where $Q_{\mathrm{ECID}} > L$. Going further, we can ask if there exist Bell inequalities such that 
\begin{equation}
  Q_{\mathrm{ECID}} = Q\,.
\end{equation}
For such an inequality, the threshold for certifying randomness coincides with the maximum quantum violation. Thus no randomness can be certified for any input pair, regardless how strong the inequality is violated by quantum theory. We say a Bell inequality with the above property is a \emph{full ECID} inequality. From a practical standpoint, full ECID inequalities are useless for spot-checking DI protocols.

Naively, one might expect ECID inequalities to be impossible. Since a Bell violation rules out models that are deterministic for \emph{all} inputs, it seems intuitive that at least one input pair $(\hat{x}, \hat{y})$ must necessarily certify randomness when the Bell inequality is violated, i.e., that $Q_{\hat x \hat y} = L$ for some $(\hat{x}, \hat{y})$ and thus $Q_{\mathrm{ECID}} = L$. However, this is not necessarily the case, since the existence of several quantum strategies each deterministic for a single input pair $(\hat{x}, \hat{y})$ does not necessarily imply the existence of a \emph{single} strategy of the form \eqref{eq:lhv} where all inputs are \emph{simultaneously} deterministic.

A parallel question arises in the no-signaling framework: are there Bell inequalities such that
\begin{equation}\label{eq:ecid-no-signaling}
N\!S_{\mathrm{ECID}} = \min_{\hat x, \hat y} N\!S_{\hat x \hat y} > L\,,
\end{equation}
and even further, such that $N\!S_{\mathrm{ECID}} = N\!S$,
meaning the maximal no-signaling violation can be achived by strategies that are partly detrministic for any inputs. 

To our knowledge, quantum ECID Bell inequalities have not been previously reported. In contrast, a known example exists in the no-signaling framework: the inequality associated with the Magic Square (MS) game \cite{ACPPS16}. The construction of no-signaling ECID inequalities is generally easier because $\mathcal{N\!S}_{\hat{x}\hat{y}}$ is a larger set than $\mathcal{Q}_{\hat{x}\hat{y}}$: there are more no-signaling strategies available to construct correlations both nonlocal and partly deterministic. Indeed, while the MS inequality is full ECID in the no-signaling case, its maximal violation certifies randomness for all inputs according to quantum theory.

Our first main result is to prove that quantum full ECID Bell inequalities exist. To establish this, we introduce a systematic construction termed symmetric deterministic extension (SDE), which allows us, starting from certain nonlocal games, to build ECID inequalities, both in the quantum and no-signaling frameworks.

We illustrate this procedure starting from the CHSH game \cite{CHSH69}, which we reformulate as the nonlocal game $G_2$ defined by the $2\times 2$ table in Fig.~\ref{fig:games}. Alice and Bob receive inputs corresponding to a row $x\in\{1,2\}$ and a column $y\in\{1,2\}$, respectively. Alice assigns binary values $v_{x1},v_{x2}\in\{\pm 1\}$ to her row subject to the parity constraints $v_{x1} \cdot v_{x2} = 1$. Bob assigns binary values $v'_{1y},v'_{2y}\in\{\pm 1\}$ to his column, subject to the parity constraints $v'_{1y} \cdot v'_{2y} = \gamma_y$, where $\gamma_1 = 1$ and $\gamma_2 = -1$. 
They win if their assignments agree at the intersection entry, i.e., $v_{xy} = v'_{xy}$.
The average winning probability
\begin{equation}\label{eq:winchsh}
    \omega_2[p] = \frac{1}{4}\sum_{x,y} p(v_{xy} = v'_{xy} | x,y)\,,
\end{equation}
defines the Bell expression for the game. This formulation is equivalent to the standard CHSH game: by identifying Alice's output as $a \equiv v_{x1}=v_{x2}$ and Bob's output as $b\equiv v'_{1y}=v'_{2y}\cdot\gamma_y$, the winning condition becomes the standard CHSH condition $a\cdot b = (-1)^{\delta_{x,2}\delta_{y,2}}$. The optimal winning probabilities are $L^{(2)} = 3/4$, $Q^{(2)} = (2+\sqrt{2})/4$, and $N\!S^{(2)} = 1$. Note that any variant of this game obtained by altering the parity constraints remains equivalent to $G_2$ (up to relabeling of inputs and outputs), provided the total number of odd-parity constraints (dashed lines) is odd (see Appendix~A).

\begin{figure}[t]
\centering
\begin{tikzpicture}[scale=1, every node/.style={font=\small}]
  \begin{scope}[shift={(-1,0)}]
    \def\n{2}
    \def\s{0.9} 

    \foreach \r in {1,...,\n} {
      \draw[line width=1pt]
        ({0.5*\s},{(\r-0.5)*\s}) -- ({(\n-0.5)*\s},{(\r-0.5)*\s});
    }

    \foreach \c in {1,...,\numexpr\n-1\relax} {
      \draw[line width=1pt]
        ({(\c-0.5)*\s},{0.5*\s}) -- ({(\c-0.5)*\s},{(\n-0.5)*\s});
    }
    \draw[line width=1pt, dashed]
      ({(\n-0.5)*\s},{0.5*\s}) -- ({(\n-0.5)*\s},{(\n-0.5)*\s});

    \foreach \r in {1,...,\n} {
      \foreach \c in {1,...,\n} {
        \fill ({(\c-0.5)*\s},{(\r-0.5)*\s}) circle (1.3pt);
        \pgfmathtruncatemacro{\rtop}{\n-\r+1}
        \node[anchor=south east]
          at ({(\c-0.24)*\s - 0.12},{(\r-0.62)*\s + 0.12}) {$v_{\rtop\c}$};
      }
    }
  \end{scope}

  \begin{scope}[shift={(1,-0.5)}]
    \def\n{3}
    \def\s{0.9}

    \foreach \r in {1,...,\n} {
      \draw[line width=1pt]
        ({0.5*\s},{(\r-0.5)*\s}) -- ({(\n-0.5)*\s},{(\r-0.5)*\s});
    }

    \foreach \c in {1,...,\numexpr\n-1\relax} {
      \draw[line width=1pt]
        ({(\c-0.5)*\s},{0.5*\s}) -- ({(\c-0.5)*\s},{(\n-0.5)*\s});
    }
    \draw[line width=1pt, dashed]
      ({(\n-0.5)*\s},{0.5*\s}) -- ({(\n-0.5)*\s},{(\n-0.5)*\s});

    \foreach \r in {1,...,\n} {
      \foreach \c in {1,...,\n} {
        \fill ({(\c-0.5)*\s},{(\r-0.5)*\s}) circle (1.3pt);
        \pgfmathtruncatemacro{\rtop}{\n-\r+1}
        \node[anchor=south east]
          at ({(\c-0.24)*\s - 0.12},{(\r-0.62)*\s + 0.12}) {$v_{\rtop\c}$};
      }
    }
  \end{scope}

  \begin{scope}[shift={(4.,-1)}]
    \def\n{4}
    \def\s{0.9}

    \foreach \r in {1,...,\n} {
      \draw[line width=1pt]
        ({0.5*\s},{(\r-0.5)*\s}) -- ({(\n-0.5)*\s},{(\r-0.5)*\s});
    }

    \foreach \c in {1,...,\numexpr\n-1\relax} {
      \draw[line width=1pt]
        ({(\c-0.5)*\s},{0.5*\s}) -- ({(\c-0.5)*\s},{(\n-0.5)*\s});
    }
    \draw[line width=1pt, dashed]
      ({(\n-0.5)*\s},{0.5*\s}) -- ({(\n-0.5)*\s},{(\n-0.5)*\s});

    \foreach \r in {1,...,\n} {
      \foreach \c in {1,...,\n} {
        \fill ({(\c-0.5)*\s},{(\r-0.5)*\s}) circle (1.3pt);
        \pgfmathtruncatemacro{\rtop}{\n-\r+1}
        \node[anchor=south east]
          at ({(\c-0.24)*\s - 0.12},{(\r-0.62)*\s + 0.12}) {$v_{\rtop\c}$};
      }
    }
  \end{scope}
\end{tikzpicture}

\caption{From left to right: the CHSH game \(G_2\), the Magic Square game \(G_3\), and the  game \(G_4\). Dot marks entries $v_{xy}$ for Alice and $v'_{xy}$ for Bob; rows corresponds to Alice's inputs, columns to Bob's. Solid (dashed) lines indicate even ($+1)$ [odd ($-1$)] parity constraints on the corresponding row or column.
The winning condition is $v_{xy} = v'_{xy}$.}
\label{fig:games}
\end{figure}

For the CHSH game, $Q^{(2)}_{\hat x \hat y} = N\!S^{(2)}_{\hat x \hat y} = L$ for all $(\hat x, \hat y)$, i.e., a violation of the CHSH inequality certifies randomness for any inputs.
We now apply the SDE procedure to $G_2$ to construct a game that fails to certify randomness for any input in the no-signaling framework.
This is done by adding one row and one column to the CHSH table, yielding the $3\times 3$ table shown in Fig.~\ref{fig:games}.
The parity constraints remain the same: each of Alice's row $(v_{x1}, v_{x2}, v_{x3})$ must have even parity (solid lines), and each of Bob's column $(v'_{1y}, v'_{2y}, v'_{3y})$ must have even parity, except for the last colum which has odd parity (dashed vertical line). The winning condition remains $v_{xy} = v'_{xy}$ and the average winning probability is
\begin{equation}
\omega_3[p] = \frac{1}{9} \sum_{x,y} P(v_{xy} = v'_{xy}| x,y)\,.
\end{equation}
This game turns out to be the well-known MS game \cite{Mermin1990,Aravind2002}, with $L^{(3)} =8/9$, $Q^{(3)} = N\!S^{(3)} = 1$. 
The perfect quantum strategy $Q^{(3)}$ is realized via products of Pauli measurements on two copies of a maximally entangled two-qubit state.

The MS game extension of the CHSH game is \emph{deterministic} in the following sense: if Alice and Bob fix deterministic values $v_{\hat x,y}=\lambda(\hat x,y)$ for the additional row $\hat x$ and $v'_{x,\hat y}=\lambda(x,\hat y)$ for the additional column $\hat y$ that satisfy the parity constraints and agree on the intersection entry, the remaining $2\times 2$ table is equivalent to the original CHSH game (see example in Fig.~\ref{fig:reduction} and formal proof in Appendix~A). Furthermore, this extension is \emph{symmetric}, meaning this reduction to the original CHSH game holds for any chosen input pair $(\hat x, \hat y)$.

\begin{figure}[t]
\centering
\begin{tikzpicture}[scale=1, every node/.style={font=\small}]
  \def\s{0.9}

  \begin{scope}[shift={(0,0)}]
    \foreach \r in {1,...,3} {
      \draw[line width=1pt] ({0.5*\s},{(\r-0.5)*\s}) -- ({2.5*\s},{(\r-0.5)*\s});
    }
    \foreach \c in {1,2} {
      \draw[line width=1pt] ({(\c-0.5)*\s},{0.5*\s}) -- ({(\c-0.5)*\s},{2.5*\s});
    }
    \draw[line width=1pt, dashed] ({2.5*\s},{0.5*\s}) -- ({2.5*\s},{2.5*\s});

    \foreach \r in {1,...,3} {
      \foreach \c in {1,...,3} {
        \fill ({(\c-0.5)*\s},{(\r-0.5)*\s}) circle (1.4pt);
        \pgfmathtruncatemacro{\rtop}{3-\r+1}
        \node[anchor=south east] at ({(\c-0.5)*\s+0.1},{(\r-0.5)*\s+0.}) {$v_{\rtop\c}$};
      }
    }
  \end{scope}

  \draw[->, line width=1.1pt] ({3.1}, {1.35}) -- ({4.2}, {1.35});

  \begin{scope}[shift={(4.8,0)}]
    \foreach \r in {1,...,3} {
      \foreach \c in {1,...,3} {
        \fill ({(\c-0.5)*\s},{(\r-0.5)*\s}) circle (1.4pt);
      }
    }

    \draw[line width=1pt] ({0.5*\s},{2.5*\s}) -- ({2.5*\s},{2.5*\s});

    \draw[line width=1pt] ({1.5*\s},{0.5*\s}) -- ({1.5*\s},{2.5*\s});

    \draw[line width=1pt, red, dashed] ({0.5*\s},{1.5*\s}) -- ({1.35*\s},{1.5*\s});
    \draw[line width=1pt, red, dashed] ({1.65*\s},{1.5*\s}) -- ({2.5*\s},{1.5*\s});
    \draw[line width=1pt, red] ({0.5*\s},{0.5*\s}) -- ({1.35*\s},{0.5*\s});
    \draw[line width=1pt, red] ({1.65*\s},{0.5*\s}) -- ({2.5*\s},{0.5*\s});
    \draw[line width=1pt, red] ({0.5*\s},{0.5*\s}) -- ({0.5*\s},{1.5*\s});
    \draw[line width=1pt, red] ({2.5*\s},{0.5*\s}) -- ({2.5*\s},{1.5*\s});

    \node[anchor=south east] at ({0.5*\s+0.1},{2.5*\s+0.}) {$+1$};
    \node[anchor=south east] at ({1.5*\s+0.1},{2.5*\s+0.}) {$-1$};
    \node[anchor=south east] at ({2.5*\s+0.1},{2.5*\s+0.}) {$-1$};
    \node[anchor=south east] at ({0.5*\s+0.10},{1.5*\s+0.}) {$v_{21}$};
    \node[anchor=south east] at ({1.5*\s+0.10},{1.5*\s+0.}) {$-1$};
    \node[anchor=south east] at ({2.5*\s+0.10},{1.5*\s+0.}) {$v_{23}$};
    \node[anchor=south east] at ({0.5*\s+0.10},{0.5*\s+0.}) {$v_{31}$};
    \node[anchor=south east] at ({1.5*\s+0.10},{0.5*\s+0.}) {$+1$};
    \node[anchor=south east] at ({2.5*\s+0.10},{0.5*\s+0.}) {$v_{33}$};
  \end{scope}
\end{tikzpicture}
\caption{Reduction of the $G_3$ Magic Square table (left) to a $G_2$ subtable (right) by fixing the values of one row and one column. To preserve the original parity constraints, the $2\times 2$ subtable must retain the parity lines shown in red. Since this subtable contains an odd number of dashed lines, the resulting game is equivalent to the CHSH game $G_2$. This holds for any valid deterministic assignment to the fixed row and column.}
\label{fig:reduction}
\end{figure}

Using the optimal no-signaling strategy for the original CHSH game, achiving $N\!S^{(2)} = 1$, to determine the entries for the $2\times 2$ subtable yields a winning probability for the $G_3$ game of $\omega_3[p]=1$ while maintaining deterministic values for the fixed row $\hat x$ and column $\hat y$. Thus $N\!S^{(3)}_{\hat x \hat y} = N\!S^{(3)}=1$ and this is true for any choice of $(\hat x, \hat y)$. Thus $N\!S^{(3)}_{\mathrm{ECID}} = N\!S^{(3)}$ and the MS expression is a no-signaling full ECID inequality. While this fact was previously noted in \cite{ACPPS16}, we established it here via an explicit constructive argument.

In contrast, because quantum strategies can satisfy the $2\times 2$ subtable only with a winning probability $Q^{(2)} = \sqrt{2}/2<1$, the same row-column fixing construction yields quantum strategies for $G_3$ with $\omega[p] = (5 + 2\sqrt{2})/9\approx 0.870> L^{(3)} = 8/9$. Consequently, any violation in the range $]8/9, 0.870]$ fails to certify randomness for any input pair in the quantum framework. While $Q^{(3)}_{\mathrm{ECID}}>L$ and $G_3$ is thus a quantum ECID inequality, it is not a full one, because the maximal quantum violation $Q^{(3)} = 1$ exceeds this threshold.

To obtain a quantum full ECID inequality, we iterate the SDE procedure. By extending $G_3$ to a $4\times 4$ game $G_4$, as shown in Fig.~\ref{fig:games}, we obtain a new game with $L^{(4)} = 15/16$. The Bell inequality $\omega_4[p]\leq 15/16$ is a facet of the Bell polytope. 
As in the previous case, any deterministic assignment to a row $\hat x$ and column $\hat y$ leaves a $3\times 3$ subtable equivalent to the MS game $G_3$. Since $Q^{(3)} = 1$, the perfect quantum strategy for $G_3$ can be used to determine values for the $3\times 3$ subtable, yielding a perfect quantum strategy for $G_4$ that is deterministic for any input pair $(\hat x, \hat y)$, implying $Q^{(4)}_{\hat x \hat y} = Q^{(4)} = 1$. Therefore, $G_4$ provides an example of a quantum full ECID inequality for which $Q^{(4)}_{\mathrm{ECID}} = Q$: even at the maximal quantum violation, no randomness is certified for any input pair. Since quantum correlations are a subset of no-signaling correlations, this is also a no-signaling full ECID inequality. We note that geometrically, quantum full ECID inequalities correspond to supporting hyperplanes of the quantum set $\mathcal{Q}$ that are highly dimensional and correspond to flat regions of its boundary (since there exists several quantum strategies that achieve the same maximal violation).

Our general SDE construction relies on three ingredients. Given a nonlocal game $G$ with a perfect winning strategy: \emph{(i)} we construct an extended game $G'$ with additional inputs and outputs; \emph{(ii)} the extension is designed such that fixing deterministic values for the additional inputs and outputs recovers the original game $G$ (possibly up to relabeling of inputs and outputs and the addition of dummy inputs and outputs); and \emph{(iii)} the construction is symmetric, ensuring the reduction to $G$ holds for any any chosen input pair $(\hat x, \hat y)$.
When these conditions are satisfied, any perfect quantum (or no-signaling) winning strategy for $G$ can be extended to a perfect quantum (or no-signaling) winning strategy for $G'$ that is deterministic for any specified input pair $(\hat x, \hat y)$.
Appendix~B describes how such SDE constructions can be applied to a broad   family of nonlocal games, including the Mermin star game \cite{Mermin1990} and the GHZ game \cite{GHZ89,GHSZ90}.

\textbf{Nonlocal correlations without certifiable randomness.}
The existence of ECID Bell inequalities raises a new question:  are there nonlocal quantum correlations from which no randomness can be certified for any inputs $(\hat x, \hat y)$, regardless of the Bell inequality used?

We define a quantum correlation $p$ as ECID if it belongs to the intersection 
\begin{equation}
  \mathcal{Q}_{\mathrm{ECID}} = \bigcap_{\hat x, \hat y} \mathcal{Q}_{\hat x \hat y}
\end{equation}
of all partly deterministic sets. A correlation $p\in \mathcal{Q_{\mathrm{ECID}}}$ thus admits, for every possible input pair $(\hat x, \hat y)$, a convex decomposition $p = \sum_{\lambda} q_\lambda p_\lambda$ where the outcomes for $(\hat x, \hat y)$ are fully predetermined by $\lambda$ (the  convex decomposition can depend on the chosen input pair). Thus, for any fixed $(x,y)$, one cannot exclude that $p$ was prepared using the decomposition in which $\lambda$ fully determines the outputs -- making 
it impossible to certify any intrinsic randomness. ECID correlations are therefore useless for device-independent spot-checking protocols.

Note that the existence of nonlocal but ECID correlations does not follow automatically from the existence of ECID inequalities. An ECID inequality $\omega$ is a nonlocal witness that such that for any input $(\hat x, \hat y)$, there exists a strategy $p^{\hat x \hat y}\in \mathcal{Q}_{\hat x\hat y}$ that is both deterministic on $(\hat x, \hat y)$ and violates the inequality. However, each of these strategies may be random for all inputs other than $(\hat x, \hat y)$ and thus may not belong to $\mathcal{Q}_{x y}$ for $(x,y) \ne (\hat x, \hat y)$. For instance, each of the partly deterministic strategies $p^{\hat x \hat y}$ constructed above and achieving $Q_{\hat x \hat y}^{(4)} = 1$ for $G_4$ yields intrisinsically random outputs for other pairs $(x,y) \ne (\hat x, \hat y)$. 
Thus, while ECID inequalities guarantee the existence of different correlations $p^{\hat x \hat y}\in \mathcal{Q}_{\hat x\hat y}$ that violate the same inequality, they do not guarantee the existence of a \emph{single} correlation $p$ that belongs to each set $\mathcal{Q}_{\hat x \hat y}$ and thus to $\mathcal{Q}_{\mathrm{ECID}}$.

More formally, given a Bell expression $\omega[p]$, define $Q_{\mathrm{ECID}^*} = \max_{p\in \mathcal{Q}_{\mathrm{ECID}}} \omega[p]$ as the maximum value achievable by an ECID correlation. Since $\mathcal{Q}_{\hat x \hat y}$ is a superset of $\mathcal{Q}_{\mathrm{ECID}}$ for each $(\hat x, \hat y)$, we have $Q_{\mathrm{ECID}^*} \leq Q_{\mathrm{ECID}}$. However, the existence of a violated ECID inequality (i.e., $Q_{\mathrm{ECID}} > L$) does not necessarily imply the existence of nonlocal ECID correlations (i.e., $Q_{\mathrm{ECID}^*} > L$).

Similarly, we define no-signaling ECID correlations as those in $\mathcal{N\!S}_{\mathrm{ECID}} = \bigcap_{\hat x, \hat y} \mathcal{N\!S}_{\hat x \hat y}$. Examples of quantum and no-signaling ECID correlations were given in \cite{ACPPS16}. However, they relied on detection inefficiencies and the constructed correlations had a strictly positive local fraction, i.e, they were mixtures of nonlocal and local correlations.
Here, we provide explicit examples of \emph{maximally nonlocal} ECID correlations—i.e., with zero local fraction, or equivalently,, that lie on the boundary of the quantum or no-signaling sets. These examples show that ECID correlations are not mere artifacts of noise or detection loopholes, but an intrinsic feature of certain nonlocal correlations.

Our construction builds on the strategies for $G_3$ and $G_4$, described above. 
For each fixed $(\hat x, \hat y)$, these strategies assign deterministic values $v_{\hat x,y}=\lambda(\hat x,y)$ and $v'_{x,\hat y}=\lambda(x,\hat y)$ to the outputs of the chosen row and column.  
Every valid $\lambda$ yields a correlation $p_\lambda^{(\hat x,\hat y)} \in \mathcal{Q}_{\hat x \hat y}$ that wins $G_3$ or $G_4$ with certainty.  The correlations $p_\lambda^{(\hat x,\hat y)}$ are generally different for different $\lambda$ and different choices of $(\hat x, \hat y)$. If we take the uniform mixture over all valid  $\lambda$: we get some correlations $p^{(\hat x,\hat y)} = \frac{1}{|\Lambda|}\sum_\lambda p_\lambda^{(\hat x,\hat y)}$ that still belong to $\mathcal{Q}_{\hat x \hat y}$ and still win $G_3$ or $G_4$ perfectly. Remarkably, by taking this uniform mixture, it happens that the dependency on the choice of $(\hat x, \hat y)$ disappears: $p^{\hat x \hat y}= p$ and we obtain a single quantum correlation in $\mathcal{Q}_{\mathrm{ECID}}$ that still win $G_3$ or $G_4$ perfectly, and which is thus maximally nonlocal. Details are given in Appendix~C.

For $G_3$, the resulting $p$ is no-signaling ECID and in fact coincides with the optimal quantum strategy for the MS game. For $G_4$, the construction yields the first example of maximally nonlocal \emph{quantum} ECID correlations.
Formally, this also implies that $N\!S^{(3)}_{\mathrm{ECID}^*} = N\!S^{(3)}_{\mathrm{ECID}}=N\!S^{(3)}$ and $Q^{(4)}_{\mathrm{ECID}^*} = Q^{(4)}_{\mathrm{ECID}} = Q^{(4)}$. 

\textbf{Discussion.}  
  We have identified explicit ``ECID'' Bell inequalities and quantum correlations for which nonlocality fails to certify randomness for any fixed inputs, and constructed maximally nonlocal examples of them.

  Strategies exploiting detection inefficiencies in \cite{ACPPS16,Masini} implicitly provide examples of quantum and no-signaling ECID correlations, but these are not maximally nonlocal and only happen in the presence of ``no-click'' events. After a first version of the present work appeared on-line, Ref.~\cite{tura} showed that a 4-input Bell inequality is partly ECID in the quantum framework (the violation needs to be above a certain threshold to certify randomness). Prior to the present work, however, there were no examples of full quantum ECID inequalities, nor of maximally nonlocal quantum ECID correlations. In the no-signaling case, the only known example was based on the MS game. Here, we provide examples of full quantum ECID inequalities and maximally nonlocal quantum ECID correlations, together with a general construction that applies to a broad family of nonlocal games, yielding additional examples in both the quantum and no-signaling frameworks.

  Conceptually, ECID inequalities and correlations show that a strong form of determinism can coexist with nonlocality. Operationally, this means that an adversary Eve who knows in advance which input pair $(\hat x, \hat y)$ will be used to generate randomness can prepare the observed correlations as a convex decomposition giving her full information about the corresponding outputs. 

   A stronger notion of predictability arises if Eve learns the inputs only \emph{after} preparing the correlations and must then guess the outputs.  Nonlocal correlations whose outcomes remain fully predictable in that scenario were termed to exhibit \emph{bound randomness} in \cite{ACPPS16}. In the no-signalling framework, the existence of nonlocal ECID correlations directly implies the existence of bound randomness \cite{ACPPS16,RLWP25}. The quantum case is sharply different: as shown in \cite{RLWP25}, bound randomness does not exist in quantum theory. Quantum ECID correlations thus realize one of the strongest forms of determinism still compatible with quantum nonlocality.

  Besides clarifying the subtle relationship between nonlocality and randomness, our results also point to a rich geometry. ECID inequalities are inequalities that are valid for all sets $\mathcal{Q}_{\hat x \hat y}$ or $\mathcal{N\!S}_{\hat x \hat y}$, that is they correspond to  valid inequalities for the union of these sets. On the other hand, ECID correlations are points that lie in the intersection of all such sets. Understanding ECID phenomena therefore requires understanding the sets \( \mathcal{Q}_{\hat x\hat y} \) and \( \mathcal{N\!S}_{\hat x\hat y} \) individually, as well as the structure of their union and intersection. While the partly deterministic no-signalling sets \( \mathcal{N\!S}_{\hat x\hat y} \) were introduced previously and some of their properties have been investigated \cite{Woodhead14,Haddara2025}, the quantum sets $\mathcal{Q}_{\hat x\hat y}$, which are more relevant for quantum applications, remain essentially unexplored.

 Finally, while we have focused on models that are deterministic for a joint input pair \( (\hat x,\hat y) \), one can also define a weaker, one-sided notion in which determinism is required only for one party, say for Alice's fixed input \( \hat x \). Our results cover this setting as well. In general, however, one-sided ECID inequalities and correlations are strictly weaker than the two-sided ones, since two-sided determinism implies one-sided determinism, but not conversely. Exploring this weaker notion may help further clarify the relationship between randomness and quantum nonlocality, in particular for device-idependent quantum key distribution.

\textit{Acknowledgments.-}
We acknowledge usefull discussions with Zhou Yangchen, Shuai Zhao and Ho Yiu Chung. R.R. acknowledges support from the General Research Fund (GRF) Grant No.\ 17211122, and the Research Impact Fund (RIF) Grant No.\ R7035-21.	
S.P. acknowledges funding  from the VERIqTAS project within the QuantERA II Programme that has received funding from the European Union's Horizon 2020 research and innovation program under
Grant Agreement No 101017733 and the F.R.S-FNRS Pint-Multi program under Grant Agreement R.8014.21, from the European Union’s Horizon Europe research and innovation programme under the project "Quantum Security Networks Partnership" (QSNP, grant agreement No 101114043), 
from the F.R.S-FNRS through the PDR T.0171.22,
from the FWO and F.R.S.-FNRS under the Excellence of Science (EOS) programme project 40007526,
from the FWO through the BeQuNet SBO project S008323N,from the Belgian Federal Science Policy through the contract RT/22/BE-QCI and the EU “BE-QCI” program. 
S.P. is a Research Director of the Fonds de la Recherche Scientifique – FNRS.

Funded by the European Union. Views and opinions expressed are however those of the authors only and do not necessarily reflect those of the European Union. Neither the European Union nor the granting authority can be held responsible for them.

  \bibliographystyle{apsrev4-2}
  \bibliography{references}

	\widetext
	\newpage
	\appendix
	\setcounter{table}{0}
	\renewcommand{\thetable}{A\arabic{table}}
	\setcounter{figure}{0}
	\renewcommand{\thefigure}{A\arabic{figure}}
	\setcounter{definition}{0}
	\renewcommand{\thedefinition}{A\arabic{definition}}
	\setcounter{theorem}{0}
	\renewcommand{\thetheorem}{A\arabic{theorem}}
	\setcounter{lemma}{0}
	\renewcommand{\thelemma}{A\arabic{lemma}}
	\setcounter{proposition}{0}
	\renewcommand{\theproposition}{A\arabic{proposition}}
	\setcounter{example}{0}
	\renewcommand{\theexample}{A\arabic{example}}
	\setcounter{corollary}{0}
	\renewcommand{\thecorollary}{A\arabic{corollary}}

	\section*{Appendix}






\setcounter{operation}{0}
\renewcommand{\theoperation}{A\arabic{operation}}





\setcounter{protocol}{0}
\renewcommand{\theprotocol}{A\arabic{protocol}}


	
	





This Appendix provides the technical details supporting the results announced in the main text.
Appendix~\ref{BLCS} introduces Binary Linear Constraint Systems (BLCSs) and establishes two auxiliary lemmas that are needed in the proof of the main construction: a sufficient condition for a BLCS to have no classical solution, and a standard-form reduction. This algebraic background is necessary because the nonlocal games used in Appendix~\ref{app:SDE} are instances of BLCS-games, and the Symmetric Deterministic Extension (SDE) protocol operates directly on the BLCS structure (the game $G_2$ and its extensions in the main text are instances of BLCS-games).
Appendix~\ref{app:SDE} formalises the SDE construction introduced informally in the main text. It gives the precise definition of an SDE of a nonlocal game, proves that every BLCS-game with a perfect quantum but no perfect local strategy admits an SDE, and illustrates on the Mermin-GHZ game that the construction also works in the multipartite case.
Appendix~\ref{subsec:SDLMSquare} constructs the explicit nonlocal quantum ECID correlation claimed in the main text. It shows that this single correlation, which wins the SDE game $G_4$ with unit probability, admits for every input pair $(\hat x, \hat y)$ a convex decomposition into perfect quantum strategies that are deterministic on $(\hat x, \hat y)$, thereby establishing that no randomness can be extracted from it for any choice of inputs.

\section{Binary linear constraint systems} \label{BLCS}
A Binary Linear Constraint System (BLCS) $A$ consists of $p$ binary variables $V(A)=\{v_1,\ldots,v_p\}$ and $q$ linear constraint equations $C(A)=\{1,\ldots, q\}$ \cite{CM12, Ji13}. Each of the constraint equations is a binary-valued linear function of some subset of the $p$ variables. We define the binary variables $v_i$ over $\{+1, -1\}$ in which case the $j$-th constraint equation is in general written as
\begin{equation}
		\prod_{i=1}^{p}  v_i^{\alpha_{i,j}}=\beta_j.
\end{equation}
Here $\alpha_{i,j} \in \{0,1\}$ indicates whether the binary variable $v_i$ (for $i=1,\ldots, p$) appears ($\alpha_{i,j}=1$) or does not appear ($\alpha_{i,j}=0$) in the $j$-th constraint equation. The parameter $\beta_j \in \{+1, -1 \}$ is said to be the \textit{parity} of the $j$-th constraint equation. The parity of the whole binary linear constraint system $A$ is defined as 
\begin{equation}
\text{par}(A) = \prod_{j=1}^q \beta_j \in \{+1,-1\}.
\end{equation}
The \textit{degree} of the binary variable $v_i$ is defined as the number of constraint equations it appears in, i.e.,
\begin{equation}
\text{deg}(v_i) = \sum_{j=1}^q \alpha_{i,j}.
\end{equation}

A \emph{classical solution} to the above BLCS is an assignment of deterministic values in $\{+1, -1\}$ to each of the binary variables $v_i$ such that all of the $q$ constraint equations are satisfied. It is a known hard problem to find a classical solution or to determine the classical realizability of a BLCS. 

On the other hand, a \emph{quantum solution} to the above BLCS associates with a (finite-dimensional) state $|\psi\>$ and a set of (finite-dimensional) Hermitian operators $\{O_i\}$ with outcomes being $\{\pm 1\}$ (wherein each of the variables $v_i$ is associated with a quantum binary observable $O_i$) satisfying that
	\begin{itemize}
		\item [1.] For any two binary variables $v_i,v_j\in V(A)$ that appear in the same constraint equation, the corresponding binary observables $O_i,O_j$ commute.
		\item [2.] For any constraint equation it holds that
        \begin{equation}
	   \prod_{i=1}^{p}  O_i^{\alpha_{i,j}}|\psi\>=\beta_j|\psi\>, \; \forall j\in C(A).
        \end{equation} 

	\end{itemize}
Note that if there exists a set of (finite-dimensional) Hermitian operators $\{O_i\}$ with outcomes in $\{\pm 1\}$ that satisfy the above conditions not only for a specific state $|\psi\>$ but for any state in that dimension, then this is referred to as an \emph{operator solution} of the BLCS.

When two binary variables $v_i$ and $v_{k}$ appear in the same constraint equation in a BLCS, we denote it by $v_i \sim v_{k}$. Denote by $V(A)$ and $V(B)$ the sets of binary variables that appear in two BLCS $A$ and $B$. We say that the two BLCS $A$ and $B$ are homomorphic if there exists a function $f: V(A) \rightarrow V(B)$ such that $f(v_i) \sim f(v_k)$ in $B$ whenever $v_i \sim v_k$ in $A$. 
We say that two BLCSs $A$ and $B$ are isomorphic if there exists Bijective function $f: V(A) \rightarrow V(B)$, such that system $A$ on the variables $f(V(A))$ is identical to system $B$, up to a sign flip applied to the variables. It is clear that two isomorphic BLCSs have the same classical and quantum realizability. 

We observe the following simple sufficient condition for a BLCS to admit no classical solution. 
\begin{lemma}~\label{obs}
Consider a binary linear constriant system $A$ in which the degree of each binary variable is even (i.e., $\text{deg}(v_i)$ is even for all $v_i \in V(A)$) and the parity of the system is $-1$ (i.e., $\text{par}(A) = -1$). Then the system $A$ does not admit a classical solution. 

\end{lemma}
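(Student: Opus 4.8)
The plan is to argue by contradiction, extracting a parity obstruction by multiplying all constraint equations together. First I would suppose, for contradiction, that a classical solution exists, i.e.\ an assignment $v_i \in \{+1,-1\}$ for $i = 1, \ldots, p$ such that every constraint equation $\prod_{i=1}^{p} v_i^{\alpha_{i,j}} = \beta_j$ holds for all $j \in \{1,\ldots,q\}$.

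The key step is to take the product of all $q$ constraint equations at once. On the right-hand side this gives $\prod_{j=1}^{q}\beta_j = \text{par}(A)$ by the definition of the parity of the system. On the left-hand side, since we are multiplying finitely many commuting elements of $\{+1,-1\}$, I can freely interchange the two products to obtain $\prod_{i=1}^{p} v_i^{\sum_{j=1}^{q}\alpha_{i,j}} = \prod_{i=1}^{p} v_i^{\text{deg}(v_i)}$, using the definition $\text{deg}(v_i) = \sum_{j=1}^{q}\alpha_{i,j}$.

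Now I would invoke the hypothesis that $\text{deg}(v_i)$ is even for every $v_i \in V(A)$: because $v_i = \pm 1$, any even power of $v_i$ equals $1$, so each factor $v_i^{\text{deg}(v_i)}$ equals $1$ and the entire left-hand side collapses to $1$. Combining the two sides yields $\text{par}(A) = 1$, which contradicts the assumption $\text{par}(A) = -1$. Hence no classical solution can exist, proving the lemma.

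I do not expect any genuine obstacle here; the only points requiring minor care are that the constraint equations are honest multiplicative identities over $\{+1,-1\}$ (so that multiplying them is legitimate) and that the double product may be reordered (immediate, being a finite product of commuting $\pm1$ values). Both are routine, so the proof is essentially a two-line parity count.
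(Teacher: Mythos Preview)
Your proof is correct and follows essentially the same approach as the paper: multiply all $q$ constraint equations, observe that the left-hand side becomes $\prod_i v_i^{\deg(v_i)} = 1$ by the even-degree hypothesis while the right-hand side is $\text{par}(A) = -1$, yielding the contradiction. The only cosmetic difference is that the paper presents this as a direct computation rather than phrasing it as a proof by contradiction.
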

	\begin{proof}
Consider the following equation obtained by multiplying all of the $q$ constraint equations in the BLCS
	\begin{equation}\label{eq_obs}
		\prod_{j=1}^{q}\left(\prod_{i=1}^{p}  v_i^{\alpha_{i,j}}\right)=\prod_{j=1}^{q} \beta_j,
	\end{equation}
	Observe that the left-hand-side of Eq.~\eqref{eq_obs} simplifies as
	\begin{equation}
		\prod_{j=1}^{q}\left(\prod_{i=1}^{p}  v_i^{\alpha_{i,j}}\right)=\prod_{i=1}^{p}  \left(\prod_{j=1}^{q} v_i^{\alpha_{i,j}}\right)= \prod_{i=1}^{p}  \left( v_i^{\sum_{j=1}^q \alpha_{i,j}} \right) = \prod_{i=1}^{p}  \left( v_i^{\text{deg}(v_i)} \right) =  1,
	\end{equation} 
where the last equality is due to the assumption that the degree of each binary variable is even, i.e., $\text{deg}(v_i)$ is even for all $i$, and in a classical solution $v_i \in \{+1, -1\}$ for all $i = 1, \ldots, p$. On the other hand, the right hand side of Eq.~\eqref{eq_obs} is $-1$ due to the assumption that the parity of the BLCS is $-1$. Therefore, no classical solution exists for the BLCS under these conditions. 
\end{proof} 

We also observe a standard form for a BLCS that does not admit a classical solution.
\begin{lemma}\label{prop1}
		Let $A$ be a binary linear constriant system that does not admit a classical solution. Then there exists a binary linear constriant system $B$ that is homomorphic to $A$ such that the degree of each variable in $B$ is even (i.e., $\text{deg}(v_i)$ is even for all $v_i \in V(B)$) and the parity of $B$ is $-1$ (i.e., $\text{par}(B) = -1$). 
\end{lemma}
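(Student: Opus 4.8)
The plan is to recast classical solvability of a $BLCS$ as solvability of a linear system over the field $\mathbb{F}_2$, and then to read the desired $B$ directly off a certificate of inconsistency. Substituting $v_i = (-1)^{x_i}$ with $x_i \in \{0,1\}$ and $\beta_j = (-1)^{b_j}$ with $b_j \in \{0,1\}$, the $j$-th constraint $\prod_{i=1}^p v_i^{\alpha_{i,j}} = \beta_j$ becomes the $\mathbb{F}_2$-linear equation $\sum_{i=1}^p \alpha_{i,j} x_i = b_j$. Since this substitution is a bijection between $\{+1,-1\}$-valued and $\{0,1\}$-valued assignments, $A$ admits a classical solution if and only if the system $Mx = b$ is consistent over $\mathbb{F}_2$, where $M$ is the $q \times p$ matrix with $M_{j,i} = \alpha_{i,j}$ and $b = (b_j)_{j=1}^{q}$.

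Because $A$ has no classical solution, $b$ is not in the column space of $M$, so by the standard solvability criterion (the Fredholm alternative over $\mathbb{F}_2$) there is a vector $y \in \mathbb{F}_2^{q}$, necessarily nonzero, with $M^{\mathsf{T}} y = 0$ and $\langle y, b \rangle = 1$. Put $S = \{\, j : y_j = 1 \,\} \neq \emptyset$ and let $B$ be the $BLCS$ whose constraints are exactly $\{\, c_j : j \in S \,\}$, on the same variables (or, if one insists every variable occur in some constraint, on the variables appearing in those constraints, since degree $0$ is even). The two defining properties of $y$ are precisely the two properties we want: $M^{\mathsf{T}} y = 0$ reads, coordinate by coordinate, $\sum_{j \in S} \alpha_{i,j} \equiv 0 \pmod 2$ for every $i$, i.e. $\text{deg}(v_i)$ is even in $B$; and $\langle y, b\rangle = 1$ reads $\sum_{j\in S} b_j \equiv 1 \pmod 2$, i.e. $\text{par}(B) = \prod_{j \in S}\beta_j = -1$.

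Finally, the homomorphism is just the inclusion: every constraint of $B$ is a constraint of $A$, so if two variables co-occur in a constraint of $B$ they co-occur in that same constraint of $A$; hence $f \colon V(B) \hookrightarrow V(A)$ satisfies $f(v_i) \sim f(v_k)$ in $A$ whenever $v_i \sim v_k$ in $B$, which is what ``$B$ homomorphic to $A$'' means. Together with Lemma~\ref{obs} this also shows that a $BLCS$ is unsatisfiable precisely when it contains such a standard-form sub-system.

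I do not expect a genuine obstacle: the argument is essentially the Fredholm alternative, and the only points needing care are (i) checking that the $\pm 1 \leftrightarrow \mathbb{F}_2$ dictionary faithfully carries ``degree'' to ``row-sum modulo $2$'' and ``parity'' to ``right-hand-side parity'' (it does, by the construction of $M$ and $b$), and (ii) fixing the direction of ``homomorphic'': the sub-system $B$ maps into $A$, while in general nothing maps $A$ into $B$, so the statement must be read as ``there is a homomorphism from $B$ to $A$,'' which the inclusion supplies.
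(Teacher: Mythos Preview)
Your proof is correct and takes a genuinely different route from the paper. The paper argues by direct manipulation: it introduces two local operations on a $BLCS$ that preserve classical unsatisfiability---negating the parity of every constraint through a chosen variable, and splitting a variable $v$ into a product $v_1 v_2$ together with a new constraint $v_1 v_2 = +1$---and then runs a short case analysis (on whether $\text{par}(A)=\pm 1$ and whether some variable has odd degree) to reach the standard form. You instead pass to $\mathbb{F}_2$ and read $B$ directly off a Fredholm-alternative certificate $y$ of inconsistency, taking $B$ to be the subsystem of $A$ indexed by the support of $y$.

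What each buys: your argument is shorter and more conceptual, and as you note it upgrades Lemma~\ref{obs} to an exact characterisation (a $BLCS$ is unsatisfiable iff it contains a subsystem with all even degrees and parity $-1$), with $B$ literally a subset of the constraints of $A$. The paper's construction instead may \emph{enlarge} the variable set (via the splitting operation), so that its natural map runs $V(A)\to V(B)$ rather than the other way; you are right to flag that the paper's phrasing of ``homomorphic'' is directionally loose, and your inclusion $V(B)\hookrightarrow V(A)$ fits the stated definition at least as well. An incidental bonus of your subsystem construction is that it trivially preserves any quantum operator solution of $A$ (just restrict to the retained constraints), which is exactly what is wanted when the lemma is invoked at the start of the proof of Theorem~\ref{theo1}.
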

\begin{proof}
To prove this lemma, we need to use the following two operations. 

\begin{operation}\label{op_negation}
For any chosen variable $v$ in the binary linear constriant system $A$, negate the parity of all constraint equations that contain $v$.
\end{operation}
Firstly, note that this operation preserves the classical realizability of the system $A$. For suppose there existed a classical solution to the new system obtained by the Operation~\ref{op_negation}. By changing the value of the variable $v$ to its opposite value ($+1 \leftrightarrow -1$) and keeping the values of all other variables, we would obtain a classical solution to the original system. Furthermore, note that applying this operation on any variable $v$ with an odd degree would change the parity of the whole system.

\begin{operation}\label{op_split}
For any chosen variable $v$ in the binary linear constriant system $A$, replace $v$ by $v_1\cdot v_2$ and add an additional constraint equation $v_1\cdot v_2=+1$ to the binary linear constraint system.
\end{operation}
Again observe that if the original BLCS does not admit a classical solution, then this Operation~\ref{op_split} preserves the classical realizability of the system. For suppose that there existed a classical solution to the new system obtained by Operation~\ref{op_split}. We would then directly obtain a classical solution to the original system in which the value of $v$ is $+1$, which is a contradiction. Furthermore, note that upon applying Operation~\ref{op_split}, we remove variable $v$ and obtain two new variables $v_1$ and $v_2$ which obey  $\text{deg}(v_1)=\text{deg}(v_2)=\text{deg}(v)+1$.

Now, we can prove the lemma by considering the following cases: 
\begin{itemize}
	\item [0.] $\text{par}(A) = -1$ and all variables in $V(A)$ have even degree. In this case, the system $A$ is already of the required standard form, i.e., $A$ and $B$ are isomorphic.  
    \item [1.] $\text{par}(A) = -1$ and there exist a non-empty set of variables $\{v_i\}$ in $V(A)$ with odd degree. Applying Operation~\ref{op_split} on each of these variables $v_i$ with odd degree gives the required system $B$. We directly see that the parity of $B$ is still $-1$ and the degree of each variable in $V(B)$ is even.
    \item [2.] $\text{par}(A) = +1$ and there exist a non-empty set of variables $\{v_i\}$ in $V(A)$ with odd degree. Applying Operation~\ref{op_negation} on exactly one of the variables $v_i$ with odd degree changes the parity of the system to $-1$. Since Operation~\ref{op_negation} does not change the degree of any variable, we are now back to case $1$. 
    \item [3.] $\text{par}(A) = +1$ and all variables in $V(A)$ have even degree. Applying Operation~\ref{op_split} on exactly one variable results in a system corresponding to case $2$.
\end{itemize}

\end{proof}

\section{Symmetric Deterministic Extension (SDE) of Nonlocal Games}\label{app:SDE}

In this section, we begin by reviewing the general framework of nonlocal games and formally introduce the notion of the \emph{Symmetric Deterministic Extension} (SDE) of nonlocal games. In Subsection~\ref{sub_blcsg}, we then focus on a specific class of bipartite nonlocal games defined for any Binary Linear Constraint System, namely the BLCS-games~\cite{CM12,Ji13}. For this class, we present explicit protocols for constructing their SDE extensions. The resulting extended games correspond to a broad family of ECID Bell inequalities, which serve as the main tool to establish the first result in the main text. Although our protocols are designed for SDE extensions of a specific class of bipartite nonlocal games, we highlight that the ECID Bell inequalities are not restricted to the bipartite case. In Subsection~\ref{sub_GHZ}, we illustrate this by providing an explicit SDE construction for the well-known GHZ game, which naturally yields a nontrivial multipartite ECID Bell inequality.

\subsection{Nonlocal Games and Symmetric Deterministic Extension}
A Bell experiment is described as a nonlocal game $G$ played by $n$ spatially separated (non-communicating) players $P_1, \ldots, P_n$ against a referee \cite{RMPBellnonlocality}. There are finite sets $X_i$ for $i \in [n]$ where $[n] := \{1,\ldots, n\}$ called the question sets and corresponding answer sets $A_i$ for $i \in [n]$, along with a rule function $W : X_1 \times \ldots \times X_n \times A_1 \times \ldots \times A_n \rightarrow \{0,1\}$, all of which are known to the players. The rule function describes the winning condition of the game, i.e., the condition that the inputs and outputs should satisfy to win the game. The players may agree on some pre-defined strategy but are not allowed to communicate during each round of the game. In each round, the referee randomly picks a set of questions $\textbf{x} \in X_1 \times \ldots \times X_n$ where $\textbf{x} = (x_1, \ldots, x_n)$ according to some probability distribution $\pi(\textbf{x})$. The referee sends question $x_i$ to the $i$-th player who returns an answer $a_i \in A_i$. The players win the round of the game if their inputs and outputs satisfy $W(\textbf{x}, \textbf{a}) = 1$ where $\textbf{a} = (a_1, \ldots, a_n)$. 

The set of joint probabilities of the outputs given the inputs $\{p(\textbf{a} | \textbf{x})\}$ (that encodes the strategy employed by the players over all the rounds) is referred to as a behavior or correlation for the Bell scenario. Classical (local hidden variable) strategies consist of a deterministic function $X_i \rightarrow A_i$ for $i\in [n]$ and convex combinations thereof obtained by using local and shared randomness. The set of classical strategies for the Bell scenario is therefore the convex hull of local deterministic behaviors (LDBs) $\{p(\textbf{a} | \textbf{x})\}$ for which $p(\textbf{a} | \textbf{x}) \in \{0,1\}$, and is denoted $\mathcal{L}(|X_1|, |A_1|; \ldots, |X_n|, |A_n|)$ or just $\mathcal{L}$ when the scenario is clear. Non-signalling strategies are the general strategies limited only by the no-communication rule, i.e., the behaviors that obey $p(\textbf{a}_S | \textbf{x}) = p(\textbf{a}_S | \textbf{x}_S)$ where $\textbf{a}_S = (a_i)_{i \in S}$ and $\textbf{x}_S = (x_i)_{i \in S}$ for any $S \subseteq [n]$. The set of no-signalling behaviors for the Bell scenario is denoted $\mathcal{N\!S}(|X_1|, |A_1|; \ldots, |X_n|, |A_n|)$ or just $\mathcal{N\!S}$ when the scenario is clear. The quantum set $\mathcal{Q}(|X_1|, |A_1|; \ldots, |X_n|, |A_n|)$ (abbreviated to $\mathcal{Q}$) is an intermediate set of behaviors that can be achieved by performing local measurements on quantum systems. Following the standard tensor-product paradigm each player is assigned a Hilbert space $\mathcal{H}_i$ of some dimension $d_i$. The players share a quantum state (a unit vector) $|\psi \rangle \in \otimes_{i=1}^n \mathcal{H}_i$ on which they perform local measurements $\{E^{(i)}_{x_i, a_i}\}_{a_i \in A_i}$ for $x_i \in X_i$, $i \in [n]$ (here the set $\{E^{(i)}_{x_i, a_i}\}_{a_i \in A_i}$ is a POVM, i.e., $E^{(i)}_{x_i, a_i} \geq 0$ and $\sum_{a_i} E^{(i)}_{x_i, a_i} = \mathtt{I}_{d_i}$). The probabilities $p(\textbf{a} | \textbf{x})$ that constitute the quantum behavior are given by $p(\textbf{a} | \textbf{x}) = \text{Tr}\left[ \otimes_{i=1}^n E^{(i)}_{x_i, a_i} |\psi \rangle \langle \psi| \right]$. 

The goal of the players is to maximize their winning probability $\omega(G)$ given by
\begin{equation}
\omega(G) := \sum_{\textbf{x}, \textbf{a}} \pi(\textbf{x}) W(\textbf{x}, \textbf{a}) p(\textbf{a} | \textbf{x}).
\end{equation}
The winning probability depends on whether the players share classical, quantum or general no-signalling resources $\{p(\textbf{a}|\textbf{x})\}$ and the corresponding optimal values are denoted $L(G)$, $Q(G)$ and $N\!S(G)$ respectively. 

It is well-known that for many games, larger winning probabilities can be achieved using quantum resources than with classical ones. A famous example is the two-player CHSH game for which the input and output sets are binary, i.e., $X_1, X_2, A_1, A_2 = \{0,1\}$ and the winning condition reads $W(x_1, x_2, a_1, a_2) = 1$ if and only if $a_1 \oplus a_2 = x_1 \cdot x_2$ where $\oplus$ denotes addition mod $2$. When the inputs are chosen uniformly $\pi(x_1, x_2) = 1/4$ for all $x_1, x_2 \in \{0,1\}$, classical strategies can at most achieve the value $L(CHSH) = 3/4$ while a quantum strategy using the maximally entangled state of two qubits achieves the value $Q(CHSH) = (2+\sqrt{2})/4$. Two-player nonlocal games $G$ for which $L(G) < Q(G) = 1$, i.e., where the quantum winning probability equals $1$, are referred to as quantum \textit{pseudo-telepathy} games, and represent a strong form of nonlocality. Another important class of bipartite nonlocal games are the Binary Linear Constraint System games \cite{CM12, Ji13} which we will use in our constructions, which we explain in detail later in the subsection~\ref{sub_blcsg}.

In the main text, we established the central result on ECID inequalities by employing the tool of \emph{symmetric deterministic extension} (SDE). In order to give a precise formal definition of SDE, and for ease of notation in the general $n$-player setting, we denote $\textbf{X} = X_1 \times \ldots \times X_n$ and $\textbf{A} = A_1 \times \ldots \times A_n$, and we focus on nonlocal games with a uniform input distribution $\Pi_U$ over all inputs.

\begin{definition} [Symmetric Deterministic Extension]\label{def:SDLifting}
Let $G = (\textbf{X}, \textbf{A}, \Pi_U, W)$ be a nonlocal game played by $n$-players $P_1, \ldots, P_n,$ where $\textbf{X} = X_1 \times \ldots \times X_n$, $\textbf{A} = A_1 \times \ldots \times A_n$ and $\Pi_U$ is the uniform distribution over $\textbf{X}$. We say that a game $\widetilde{G} = (\widetilde{\textbf{X}}, \widetilde{\textbf{A}}, \widetilde{\Pi}_U, \widetilde{W})$ where $\widetilde{\textbf{X}} = \widetilde{X}_1 \times \ldots \times \widetilde{X}_n$, $\widetilde{\textbf{A}} = \widetilde{A}_1 \times \ldots \times \widetilde{A}_n$ is a symmetric deterministic extension of game $G$ if the following conditions hold.
\begin{enumerate}
\item \textbf{Determinism}: for any given distinguished input $\textbf{x}^* = (x_1^*, \ldots, x_n^*) \in \widetilde{\textbf{X}}$, there exists a deterministic outcomes to the input $\textbf{x}^*$ such that upon fixing the deterministic outcomes for $\textbf{x}^*$ the reduced sub-game $\widetilde{G}' = (\widetilde{\textbf{X}}', \widetilde{\textbf{A}}', \widetilde{\Pi}'_U, \widetilde{W}')$ is equivalent the original game $G$ (up to a relabeling of inputs and outputs and the addition of dummy inputs and outputs), where $\widetilde{\textbf{X}}'  = \widetilde{X}_1 \setminus \{x_1^*\} \times \ldots \times \widetilde{X}_n \setminus \{x_n^*\}$, $\widetilde{\Pi}'_U$ is the uniform distribution over $\widetilde{\textbf{X}}'$, $\widetilde{W}'$ corresponds to $\widetilde{W}$ restricted to the inputs $\widetilde{\textbf{X}}'$ output $\widetilde{\textbf{A}}'$, and the output alphabet $\widetilde{\textbf{A}}$ of $\widetilde{G}$ for inputs in $\widetilde{\textbf{X}}'$ factorizes as $\widetilde{{A}}_i \cong \widetilde{{A}}'_i \times \widetilde{{A}}^*_i,\;\forall i $ where $\widetilde{{A}}^*_i$ is fixed by the deterministic outcomes associated with $\textbf{x}^*$.

\item \textbf{Symmetry}: the above property 1. holds for every $\textbf{x}^* = (x_1^*, \ldots, x_n^*) \in \widetilde{\textbf{X}}$.
\end{enumerate}
\end{definition}

Intuitively, the SDE requires that deleting any distinguished input $\textbf{x}^*$ and fixing its deterministic outputs reduces the extended game $\widetilde{G}$ to the original game $G$ (up to a relabeling of inputs and outputs and the addition of dummy inputs and outputs). 

\subsection{Bipartite BLCS-games and their the Symmetric Deterministic Extension protocols}\label{sub_blcsg}

A procedure that constructs such an extended game $\widetilde{G}$ satisfying Definition~\ref{def:SDLifting} for a given nonlocal game $G$ will be referred to as a symmetric deterministic extension protocol for $G$. In this subsection, we present explicit protocols for a specific class of bipartite nonlocal games, known as bipartite BLCS-games in the literature~\cite{CM12, Ji13}. The resulting extended games correspond to a family of ECID Bell inequalities, which provide the main tool to establish the first result in the main text.

Every binary linear constraint system $A$, that we explained in Section.~\ref{BLCS}, can be used to formulate a two-player nonlocal game $G_A$. The set of questions to one player Alice is the set of constraint equations in the system i.e. $X_1 = C(A) = \{1, \ldots, q\}$, and the set of questions to the second player Bob is the set of variables in the system $X_2 = V(A) = \{v_1, \ldots, v_p\}$. In each round, the referee randomly sends a constraint equation to Alice and a variable to Bob, so that $\pi(x_1, x_2) = \frac{1}{p q}$. Given as input the $j$-th constraint equation, Alice outputs an assignment in $\{+1, -1\}$ to each variable in the equation such that the equation is satisfied (known as the \textit{parity} condition). Given as input a variable $v_i$, Bob outputs an assignment in $\{+1, -1\}$ to the variable. The winning condition of the game is given as follows. In case Bob's input variable $v_i$ does not belong to Alice's input constraint equation (i.e., when $\alpha_{i,j} = 0$), every pair of answers by the players is accepted. In case Bob's input variable $v_i$ belongs to Alice's input constraint equation (when $\alpha_{i,j} = 1$), the referee accepts the answers if and only if the two players' assignments agree on the variable $v_i$ (known as the \textit{consistency} condition). 

A one-to-one correspondence exists between classical solutions for the BLCS $A$ and the perfect classical strategies for the corresponding nonlocal game $G_A$. On the other hand, a correspondence also exists between operator solutions for the BLCS $A$, and perfect quantum strategies for the corresponding nonlocal game $G_A$. The perfect quantum winning strategies are obtained from the operator solution in the following way:
	\begin{itemize}
		\item [1.] Replace the binary variable $v_i,\forall i\in[p]$ by the Hermitian operators $O_i$ on a Hilbert space $\mathcal{H}$ of dimension $d$ such that $O_i^2=\mathbb{I}$.
		\item [2.] Let Alice and Bob share a maximally entangled state of local dimension $d$ and perform measurements corresponding to the observables $O_i$ on their local subsystem.

	\end{itemize}
Note that in the operator solution, if two binary variables $v_i,v_j$ appear in the same constraint equation, the corresponding binary observables $O_i,O_j$ commute.
Therefore, the observables appearing in the same constraint equation can be jointly measured by Alice when she receives as input that constraint equation. Furthermore, the operator solution satisfies the constraint equations, i.e., for the $j$-th constraint equation it holds that
\begin{equation}
			\prod_{i=1}^{p}  O_i^{\alpha_{i,j}}=\beta_j\mathbb{I},
\end{equation} 
so that the parity condition is satisfied. Finally, since Alice and Bob measure observables $O_i$ on their halves of a maximally entangled state, they are guaranteed to obtain the same answers for the binary variable, i.e., the consistency condition is satisfied.

\begin{theorem}\label{theo1}
For any two-player BLCS-game $G_A$ associated with a binary linear constraint system $A$ such that $L(G_A) < Q(G_A) = 1$, there exists a symmetric deterministic extension of $G_A$.
\end{theorem}

\begin{proof}

The condition $L(G_{A})<Q(G_{A})=1$ implies that the associated BLCS $A$ admits an operator solution but no classical solution. To prove this theorem, by Proposition~\ref{prop1}, it suffices to consider the case where the parity of the corresponding BLCS $A$ is $-1$ and the degree of each variable is even.  
Without loss of generality, assume that $A$ contains $p$ variables $V(A) = \{v_1,\ldots,v_p\}$ and $q$ constraints $C(A) = \{1,\ldots,q\}$ such that each variable has even degree and the parity of the system is $\text{par}(A)=-1$. 

Let $S_{v_i}$ denote the set of constraint equations in which the variable $v_i$ appears. We are now ready to introduce the following protocol, which constructs a new BLCS $\widetilde{A}$ with $p+\sum_{i=1}^p |S_{v_i}|+q+1$ binary variables (the corresponding variable set is $V(\widetilde{A})$) and $q+1$ constraint equations (the corresponding constraint set is $C(\widetilde{A})$). 

\begin{protocol}{Symmetric Deterministic Extension for the bipartite BLCS-game $G_{A}$ with $L(G_{A})<Q(G_{A})=1$.}
\noindent\textbf{Input:} A BLCS $A$ with $\text{par}(A)=-1$ and each variable of even degree that admits an operator solution.

\noindent\textbf{Goal:} Construct a new BLCS $\widetilde{A}$ such that its associated bipartite BLCS-game $G_{\widetilde{A}}$ is a SDE of $G_{A}$ satisfying Definition~\ref{def:SDLifting}.

\noindent\textbf{The protocol:} \begin{itemize}

\item[1.] For each variable $v_i \in V(A)$ and each constraint $j \in S_{v_i}$, define a new binary variable labelled $v_{i,j}$. Multiply the left-hand-side of each constraint equation in the set $S_{v_i} \setminus \{j\}$ by the variable $v_{i,j}$. In this way, we introduce $\sum_{i=1}^p |S_{v_i}|$ new variables into the system. Let $V' = \{v_{i,j}\}$ denote the set of these new variables.

\item[2.] Define $q+1$ additional binary variables $u_1,\ldots,u_{q+1}$, with associated set $U = \{u_1,\ldots,u_{q+1}\}$. Multiply the left-hand-side of the $i$-th constraint equation by $u_i$ for each $i \in \{1,\ldots,q\}$, and multiply the left-hand-side of each constraint equation, whose parity is $-1$, by $u_{q+1}$.

\item[3.] Add the $(q+1)$-th constraint equation to the system, given as $(\prod_{v\in V'} v)\cdot (\prod_{u\in U} u)=+1$. 

\item[4.] Denote the resulting BLCS by $\widetilde{A}$. Its variable set is $V(\widetilde{A}) = V(A) \cup V' \cup U$, and its constraint set is $C(\widetilde{A}) = C(A) \cup \{q+1\}$.
\vspace{-0.8em}
\end{itemize}
\end{protocol}
\vspace{0.5em}
Now we explain why the bipartite BLCS-game $G_{\widetilde{A}}$ is a SDE of $G_{A}$. Firstly, observe that the BLCS $\widetilde{A}$ does not admit any classical solution. This is straightforwardly seen from the fact that the parity of the system is still $\text{par}(\widetilde{A})=-1$. and the degree of each binary variable is still even (recall Lemma.~\ref{obs}). To elaborate, (i) the degree of each variable in $V(A)$ doesn't change, it's even; (ii) the degree of the variable $v_{i,j}$ in $V'$ is $|S_{v_i}|$, an even number; (iii) the degree of the variable $u_i$ for $i\in\{1,\ldots,q \}$ is $2$; and (iv) the variable $u_{q+1}$ appears in every constraint equation with $-1$ parity and in the constraint equation $q+1$, so that its degree is even.

Secondly, observe that $\widetilde{A}$ admits a operator solution and for each $j \in C(\widetilde{A})$ there is a solution that assigns pure deterministic values to all the binary variables in $j$. This can be seen by considering the following cases: 
\begin{enumerate}[(i)]
\item Consider the case where $j= q+1$. In this case, we can assign deterministic value $+1$ to all new variables $\{v_{i,j}\}$ and $\{u_1,\ldots, u_{q+1}\}$, i.e., to all the variables in $V'\cup U$. The reduced system is just the BLCS $A$, for which an operator solution is known to exist. 

\item Consider the case where $j\in[q]$ and the parity of the constraint equation $j$ is $+1$. In this case, we assign the deterministic value $+1$ to each binary variable appearing in equation $j$. Moreover, we assign the deterministic deterministic value $+1$ to all the binary variables in $V'\cup U$, except for those variables $v_{i,j}$ for which the corresponding variables $v_i$ appear in equation $j$. The reduced system obtained in this way is isomorphic to the original BLCS $A$. Indeed, in the reduced system, the variables $v_{i,j}$ that are left unfixed play the role of the variables $v_i$ in equation $j$ of the original BLCS $A$. Consequently, the $(q+1)$-th constraint in the reduced system replaces constraint $j$ in the original BLCS $A$.

\item Consider the case where $j\in[q]$ and the parity of the constraint equation $j$ is $-1$. In this case, assign the deterministic value $-1$ to the variable $u_{q+1}$ and to all $u_i$ with $i\in \{1,\ldots,q\},i\neq j$ such that the parity of constraint $i$ is $-1$.
Assign the deterministic value $+1$ to all remaining variables $u_i, \; i\in \{1,\ldots,q\}$.  In addition, assign the deterministic value $+1$ to all other variables in equation $j$, and to all variables in $V'\cup U$, except for the variables $v_{i,j}$ for which the corresponding $v_i$ appear in constraint equation $j$. By the same reasoning as in case (ii), the reduced system is isomorphic to the original BLCS $A$.
\end{enumerate}
\end{proof}

We make two remarks regarding the above protocol. First, the protocol we presented is not necessarily optimal. Especially in highly symmetric BLCS instances, one may construct extensions that use fewer variables than those required by our construction. 
Secondly, in the above Theorem~\ref{theo1} we stated the protocol only for BLCS systems $A$ such that the associated bipartite BLCS-game $G_A$ satisfies $L(G_A)<Q(G_A)=1$. A natural question is whether the property extends to more general games with $Q(G_A)<1$. The protocol itself can indeed be applied in this broader class. However, in such cases the optimal quantum winning probability $Q(G_{\widetilde{A}})$ of the SDE game $G_{\widetilde{A}}$ may not be attained by the partially deterministic quantum strategies used in the proof, and must instead be computed for each specific instance $G_{\widetilde{A}}$. Nevertheless, as shown in the proof, if the partially deterministic quantum strategy achieves a value $\omega(G_{\widetilde{A}})$ strictly larger than the classical optimum, i.e.$ L(G_{\widetilde{A}}) < \omega(G_{\widetilde{A}}) \leq Q(G_{\widetilde{A}}),$ then the ECID (\textit{Every Chosen Input Deterministic}) phenomenon still holds whenever the observed winning probability of $G_{\widetilde{A}}$ lies within the interval $[L(G_{\widetilde{A}}), \,\omega(G_{\widetilde{A}})]$. This is precisely the situation discussed in the main text for the SDE version of the CHSH game $G_2$. The general case of whether arbitrary nontrivial nonlocal games with $Q < 1$ admit an SDE protocol remains open for future work.

\subsection{Symmetric Deterministic Extension for the Mermin-GHZ Game}\label{sub_GHZ}

In the previous subsection, we presented the general SDE protocol for bipartite BLCS-games, which yields a family of ECID Bell inequalities. In this subsection, we demonstrate that these ECID Bell inequalities are not restricted to the bipartite case. As a concrete example, we provide a simple geometric interpretation of the SDE for the well-known three-player Mermin-GHZ game~\cite{GHZ89,GHSZ90,Mermin1990}.

\begin{example}\label{eg:GHZ}
The symmetric deterministic extension of the cube game $G_{\text{cube}}$ (a geometric variant of the three-player Mermin-GHZ game) yields a new nonlocal game $G_{\widetilde{\text{cube}}}$ such that 
$L(G_{\widetilde{\text{cube}}}) < Q(G_{\widetilde{\text{cube}}}) = 1$ 
and $G_{\widetilde{\text{cube}}}$ satisfies Definition~\ref{def:SDLifting}.
\end{example}

\begin{proof}
The three-player Mermin-GHZ game $G_{GHZ}$ is a binary-input, binary-outcome nonlocal game that admits a perfect quantum winning strategy but no perfect classical winning strategy. The game is defined as follows. In each round, the referee uniformly at random chooses an input bit string $\bm{x}=(x_1,x_2,x_3)$ of even Hamming weight, i.e., $|\bm{x}|=\sum_{i=1}^3 x_i \equiv 0 \pmod{2}$, and sends bit $x_i \in \{0,1\}$ to player $i \in \{1,2,3\}$. Each player responds with a bit $a_i \in \{0,1\}$, and the players (who are not allowed to communicate) win the game if and only if the Hamming weight of the output string $\bm{a}=(a_1,a_2,a_3)$ equals $|\bm{x}|/2$. The maximum classical winning probability is $L(G_{GHZ}) = \frac{3}{4}$. In contrast, there exists a perfect quantum strategy achieving winning probability $Q(G_{GHZ}) = 1$, where the players share a three-qubit GHZ state, and player $i$ measures $\sigma_x$ on input $x_i=0$ and $\sigma_y$ on input $x_i=1$ to return answers.

We now formulate a geometric variant of the Mermin-GHZ game, which we call the \textit{cube} game $G_{\text{cube}}$, in the spirit of~\cite{HFSD18}. In this game, the three players, Alice, Bob, and Charlie, receive inputs that correspond to the square faces of a cube (see Fig.~\ref{fig:cube} for an illustration). Label the coordinates of the cube by $(x_1, x_2, x_3)$ with $x_1, x_2, x_3 \in \{0,1\}$. The input $x_1=0$ to Alice corresponds to the left face of the cube, while $x_1=1$ corresponds to the right face. Similarly, Bob’s input $x_2=0$ corresponds to the front face and $x_2=1$ to the back face. Analogously, Charlie’s input $x_3=0$ corresponds to the top face and $x_3=1$ to the bottom face. For each input triple $(x_1, x_2, x_3)$, the players must output binary assignments in $\{+1,-1\}$ to the vertices of the corresponding face, where the product of assignments for each face must satisfy prescribed parity constraints, namely $\text{par}(x_i=0)=+1$ and $\text{par}(x_i=1)=-1$ for all $i$. The players win the game if their outputs satisfy the consistency condition that, at the intersection vertex with coordinates $(x_1,x_2,x_3)$ (the common vertex of the three chosen faces), the product of three players' assigned values is $+1$.

\begin{figure}[H]
    \centering
    \includegraphics[width=1\linewidth]{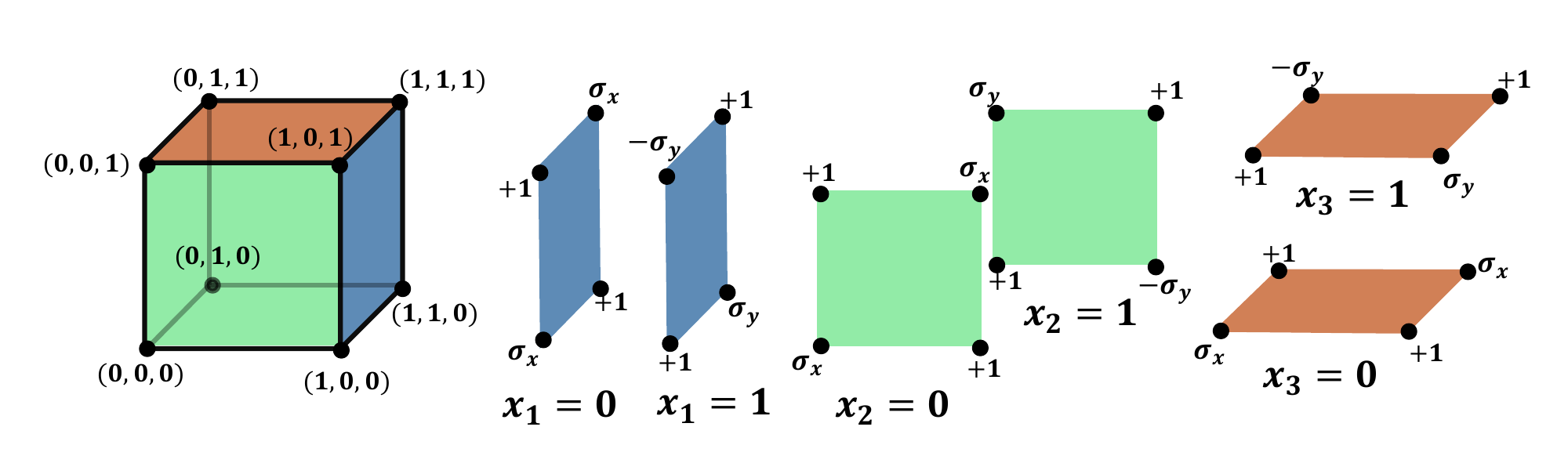}
    \caption{The graph illustration of the cube game $G_{\text{cube}}$ and its perfect quantum winning strategy. For input $x_1 = 0$, Alice returns answer $+1$ for the vertices $(0,0,1)$ and $(0,1,0)$ and returns the outcome obtained by measuring $\sigma_x$ on her qubit for the vertices $(0,1,1)$ and $(0,0,0)$. Similarly for input $x_1 = 1$, Alice returns answer $+1$ for vertices $(1,1,1)$ and $(1,0,0)$ and returns the outcome obtained by measuring $\sigma_y$ (respectively $-\sigma_y$) on her qubit for the vertex $(1,1,0)$ (respectively $(1,0,1)$). Similar strategies exist for Bob and Charlie as indicated in the green and brown faces. }
    \label{fig:cube}
\end{figure}

It is readily seen that no classical strategy exists for non-communicating players to win the game $G_{\text{cube}}$. In particular, the consistency condition requires that the product of the assignments of all players to all the vertices is $+1$ while the parity condition requires that this product is $-1$. This implies that when the parity condition is definitively imposed, the consistency condition cannot be satisfied for at least one vertex, so that the classical winning probability for this variant of the cube game is at most $ 1 - 1/8 = \frac{7}{8}$. The classical strategy that achieves $7/8$ is one where Alice, Bob and Charlie assign the value $+1$ to all vertices except the vertex $(1,1,1)$ for which they all assign the value $-1$. This strategy satisfies all the parity constraints while the consistency constraint is satisfied for all vertices other than $(1,1,1)$. Thus the maximum classical winning probability is $L(G_{\text{cube}}) =7/8.$

On the other hand, a perfect quantum winning strategy achieving $Q(G_{\text{cube}}) = 1$ exists for the game that is inherited from the perfect quantum strategy for the Mermin-GHZ game. The strategy involves the players sharing a three-qubit GHZ state $|\psi_{GHZ} \rangle = \frac{1}{\sqrt{2}} \left(|000 \rangle + | 111 \rangle \right)$ and the players return the answers obtained by performing the measurements shown in Fig. \ref{fig:cube}. To elaborate, for instance, for input $x_1 = 0$, Alice returns $+1$ for the vertices $(0,0,1)$ and $(0,1,0)$ and returns the outcome of the measurement $\sigma_x$ on her qubit for the vertices $(0,1,1)$ and $(0,0,0)$. Similarly for input $x_1 = 1$, Alice returns $+1$ for the vertices $(1,0,0)$ and $(1,1,1)$, and returns the outcome of the measurement $\sigma_y$ (respectively $-\sigma_y$) on her qubit for the vertex $(1,1,0)$ (respectively $(1,0,1)$). Similar strategies for Bob and Charlie are as shown in the Fig. \ref{fig:cube}. The parity conditions are directly seen to be satisfied for each face since evidently
\begin{equation}\label{eq:par-GHZ}
\begin{split}
    &\quad \langle \psi_{GHZ}| \left[\sigma_x \cdot \sigma_x \cdot (+1) \cdot (+1) \right] \otimes \mathbf{1} \otimes \mathbf{1} | \psi_{GHZ} \rangle = +1; \quad \langle \psi_{GHZ} | \left[(+1) \cdot (-\sigma_y) \cdot (+1) \cdot \sigma_y \right] \otimes \mathbf{1} \otimes \mathbf{1} | \psi_{GHZ} \rangle = -1,\\
    &\quad \langle \psi_{GHZ} | \mathbf{1} \otimes \left[(+1) \cdot \sigma_x \cdot (+1) \cdot \sigma_x \right] \otimes \mathbf{1} | \psi_{GHZ} \rangle = +1; \quad \langle \psi_{GHZ} | \mathbf{1} \otimes \left[(+1) \cdot (-\sigma_y) \cdot (+1) \cdot \sigma_y \right] \otimes \mathbf{1} | \psi_{GHZ} \rangle = -1,\\
    &\quad \langle \psi_{GHZ} | \mathbf{1} \otimes \mathbf{1} \otimes \left[(+1) \cdot \sigma_x \cdot (+1) \cdot \sigma_x \right] | \psi_{GHZ} \rangle = +1; \quad \langle \psi_{GHZ} | \mathbf{1} \otimes \mathbf{1} \otimes \left[(\sigma_y) \cdot (+1) \cdot (-\sigma_y) \cdot (+1) \right] | \psi_{GHZ} \rangle = -1. \\
\end{split}    
\end{equation}
The consistency conditions are also satisfied for each vertex since
\begin{equation}\label{eq:cons-GHZ}
\begin{split}
    &\quad \langle \psi_{GHZ}| \sigma_x \otimes \sigma_y \otimes (-\sigma_y)| \psi_{GHZ} \rangle = +1, \\
    &\quad \langle \psi_{GHZ} | (-\sigma_y) \otimes \sigma_x \otimes \sigma_y | \psi_{GHZ} \rangle = +1,\\
    &\quad \langle \psi_{GHZ} | \sigma_y \otimes (-\sigma_y) \otimes \sigma_x | \psi_{GHZ} \rangle = +1,\\
    &\quad \langle \psi_{GHZ} | \sigma_x \otimes \sigma_x \otimes \sigma_x | \psi_{GHZ} \rangle = +1. \\
\end{split}
\end{equation}

Formulating the Mermin-GHZ game in this geometric manner allows us to define a symmetric deterministic extension nonlocal game $G_{\widetilde{\text{cube}}}$. The game $G_{\widetilde{\text{cube}}}$ is illustrated in Fig.~\ref{fig:3by3grid}, where the vertices are labeled by triples $(x_1,x_2,x_3)$ with $x_i \in \{0,1,2\}$ for $i=1,2,3$. 

In $G_{\widetilde{\text{cube}}}$, each player $i$ receives an input $x_i \in \{0,1,2\}$. Alice’s input $x_1$ specifies the face consisting of the nine vertices $\{(x_1,x_2,x_3)\mid x_2,x_3\in\{0,1,2\}\}$. Similarly, Bob’s input $x_2$ determines the face $\{(x_1,x_2,x_3)\mid x_1,x_3\in\{0,1,2\}\}$, and Charlie’s input $x_3$ determines the face $\{(x_1,x_2,x_3)\mid x_1,x_2\in\{0,1,2\}\}$. 
For each input, the players must assign binary values in $\{+1,-1\}$ to the nine vertices of their face, subject to parity constraints that $\text{par}(x_i = 0) = +1$, $\text{par}(x_i = 1) = -1$ and $\text{par}(x_i = 2) = +1$. Specifically, for each player $i$, the product of the assigned values on the nine vertices of the face corresponding to $x_i$ must equal $\text{par}(x_i)$. Lastly, the players win the game if their assignments satisfy the consistency condition that, at the intersection vertex $(x_1,x_2,x_3)$ common to all three faces, the product of their three assignment is $+1$.

It is readily seen that no classical strategy exists for non-communicating players to win $G_{\widetilde{\text{cube}}}$. As before, the consistency condition imposes that the product of the assignments of all players to all vertices is $+1$ while the parity constraint imposes that this product is $-1$. This implies that when the parity constraint is imposed on all players, the consistency condition cannot be satisfied for at least one vertex, so that the classical winning probability for this SDE cube game is  $L(G_{\widetilde{\text{cube}}}) = 1 - 1/3^3 = \frac{26}{27}$. The classical strategy that achieves $L(G_{\widetilde{\text{cube}}})$ is one wherein Alice, Bob and Charlie assign the value $+1$ to all vertices except the vertex $(1,1,1)$ for which they all assign the value $-1$. This strategy satisfies all the parity constraints while the consistency constraint is satisfied for all vertices other than $(1,1,1)$.

On the other hand, a perfect quantum winning strategy achieving $Q(G_{\widetilde{\text{cube}}}) = 1$ exists for $G_{\widetilde{\text{cube}}}$. The strategy involves the players sharing the three-qubit GHZ state $|\psi_{GHZ} \rangle$ and the players return the answers obtained by performing the measurements shown in Fig. \ref{fig:3by3grid}. To elaborate, for instance, for input $x_1 = 0$, Alice returns $+1$ for all the vertices other than $(0,0,0)$ and $(0,1,1)$; for the vertices $(0,0,0)$ and $(0,1,1)$ she returns the value obtained by measuring $\sigma_x$ on her qubit. Similarly for input $x_1 = 1$, she returns $+1$ for all vertices other than $(1,0,1)$ and $(1,1,0)$; for the vertex $(1,1,0)$ (respectively $(1,0,1)$) she returns the value obtained by measuring $\sigma_y$ (respectively $-\sigma_y$) on her qubit. For input $x_1 = 2$, she returns $+1$ for all the nine vertices. Similar strategies for Bob and Charlie are as shown in the Fig. \ref{fig:3by3grid}. The parity conditions are directly satisfied as before. The consistency conditions are also satisfied, they hold trivially for all vertices other than $(0,0,0)$, $(1,1,0)$, $(0,1,1)$ and $(1,0,1)$; and for these vertices the consistency follows from Eq.(\ref{eq:cons-GHZ}). It is also evident from the Fig. \ref{fig:3by3grid} that the strategy is isomorphic to that for the game $G_{\text{cube}}$ applied to the inputs $x_1, x_2, x_3 \in \{0,1\}$. This strategy is partially deterministic, it returns deterministic answers for the input triple $(x_1^*, x_2^*, x_3^*) = (2,2,2)$ so that no randomness can be certified for this input.

Furthermore, such a partially deterministic quantum strategy that returns deterministic answers for the input triple $(x_1^*, x_2^*, x_3^*)$ exists for any $(x_1^*, x_2^*, x_3^*) \in \{0,1,2\}^3$ as we now show by construction. Given the triple $(x_1^*,x_2^*,x_3^*)$, the following procedure generates a partially deterministic quantum strategy that achieves $Q(G_{\widetilde{\text{cube}}}) = 1$ while returning a deterministic outcome for the triple $(x_1^*,x_2^*,x_3^*)$.
        \begin{enumerate}
            \item [1.] Assign $(+1,+1,+1)$ to all vertices $(x_1, x_2, x_3)$ for which $x_1 = x_1^*$ or $x_2 = x_2^*$ or $x_3 = x_3^*$.
            \item [2.] Assign the quantum strategy for $G_{\text{cube}}$ from Fig. \ref{fig:cube} for the remaining $8$ vertices. 
            \item [3.] There are two possibilities for Alice: 
            \begin{enumerate}
            \item [(a)] The parity constraints are satisfied for all three inputs. In this case, no modification is needed.
           \item [(b)] The parity constraints for exactly two inputs $x'_1, x''_1$, are not satisfied, note that one of the $+1$ parities is necessarily satisfied by the above strategy. In this case, flip the assignments made by Alice and Bob to the vertices $(x'_1, x_2^*, x_3^*)$ and $(x''_1,x_2^*, x_3^*)$. 
           \end{enumerate}
           \item[4.] Repeat step 3 for Bob and then for Charlie. 
        \end{enumerate}
        
\begin{figure}[H]
    \centering
    \includegraphics[width=1\linewidth]{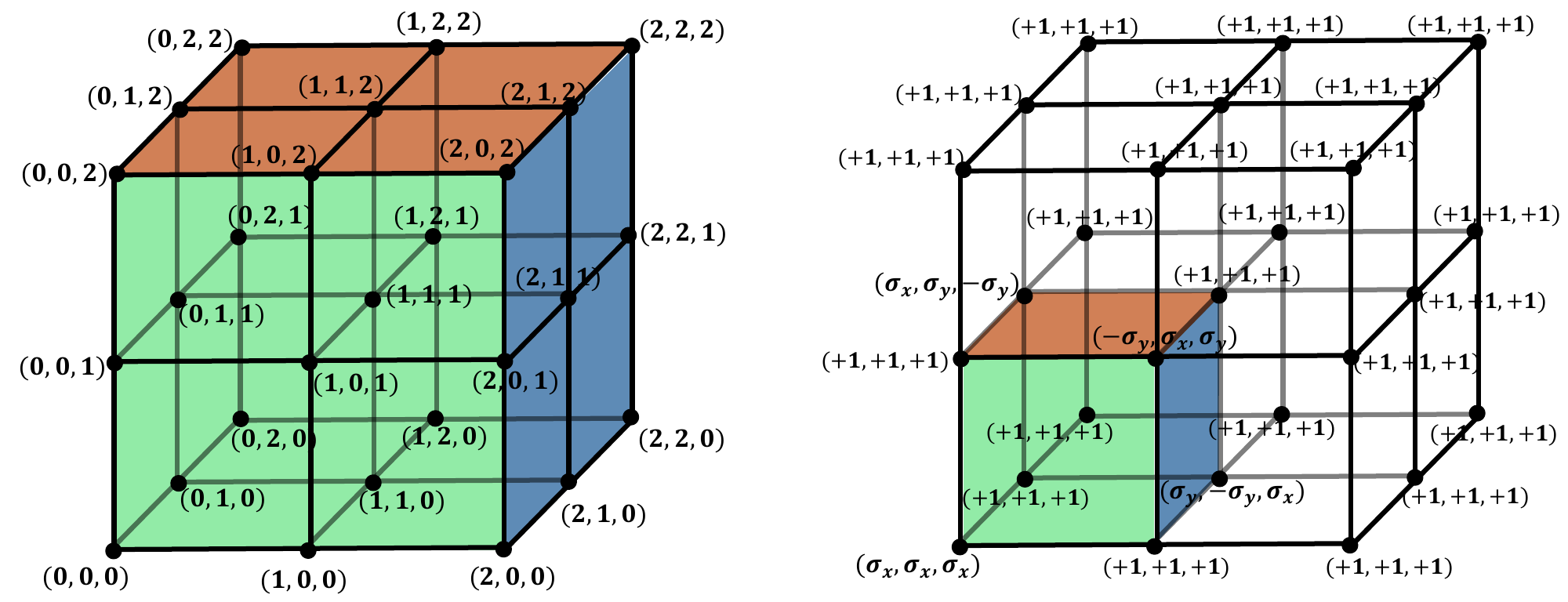}
    \caption{Left: the symmetric deterministic extension of the cube game, $G_{\widetilde{\text{cube}}}$. Right: One quantum strategy that achieves $Q(G_{\widetilde{\text{cube}}}) = 1$ while returning deterministic answers for the input triple $(x_1^*, x_2^*, x_3^*) = (2,2,2)$. The labels in brackets denote the strategies used by Alice, Bob and Charlie respectively, for instance for the vertex $(0,0,0)$ each player returns the answer obtained by measuring $\sigma_x$ on their qubit while for the vertex $(0,0,1)$ each player directly returns the value $+1$.}
    \label{fig:3by3grid}
\end{figure}

\end{proof}

\section{The Nonlocal Quantum ECID Correlation}\label{subsec:SDLMSquare}
In the previous section, we discussed how to apply the SDE construction to a family of nonlocal games to obtain ECID Bell inequalities. In each such case, every quantum strategy that achieves the maximum value is deterministic for some specific input $\bm{x}^*$, so there exist $|\textbf{X}|$ different correlations, each partly deterministic for one specific input while maximally violating the same inequality.
In this section, we introduce a stronger notion of the \textit{Every Chosen Input Deterministic} (ECID) phenomenon in quantum mechanics, namely the \emph{nonlocal quantum ECID correlation}. This refers to a single nonlocal quantum correlation that contains no certifiable randomness for any chosen $\bm{x}^*$.

The ECID correlation we consider is based on the Magic Square game $G_3$ and its SDE extension game $G_4$, as defined in the main text. Their hypergraph representations of these two games are shown in Fig.~\ref{MS}, where each vertex represents a binary variable in $\{\pm 1\}$, each hyperedge represents a constraint equation, solid edges indicate parity $+1$, and dashed edges indicate parity $-1$. 
Furthermore, it is straightforward to see that, given any valid vertex assignment (satisfying the parity constraints) for a fixed pair of row and column in the right-hand-side hypergraph, the corresponding reduced system is isomorphic to the system represented by the left-hand-side hypergraph (see Sec.~\ref{BLCS} for the notion of isomorphic BLCSs).

 \begin{figure}[H]
    \centering
    \includegraphics[width=0.5\linewidth]{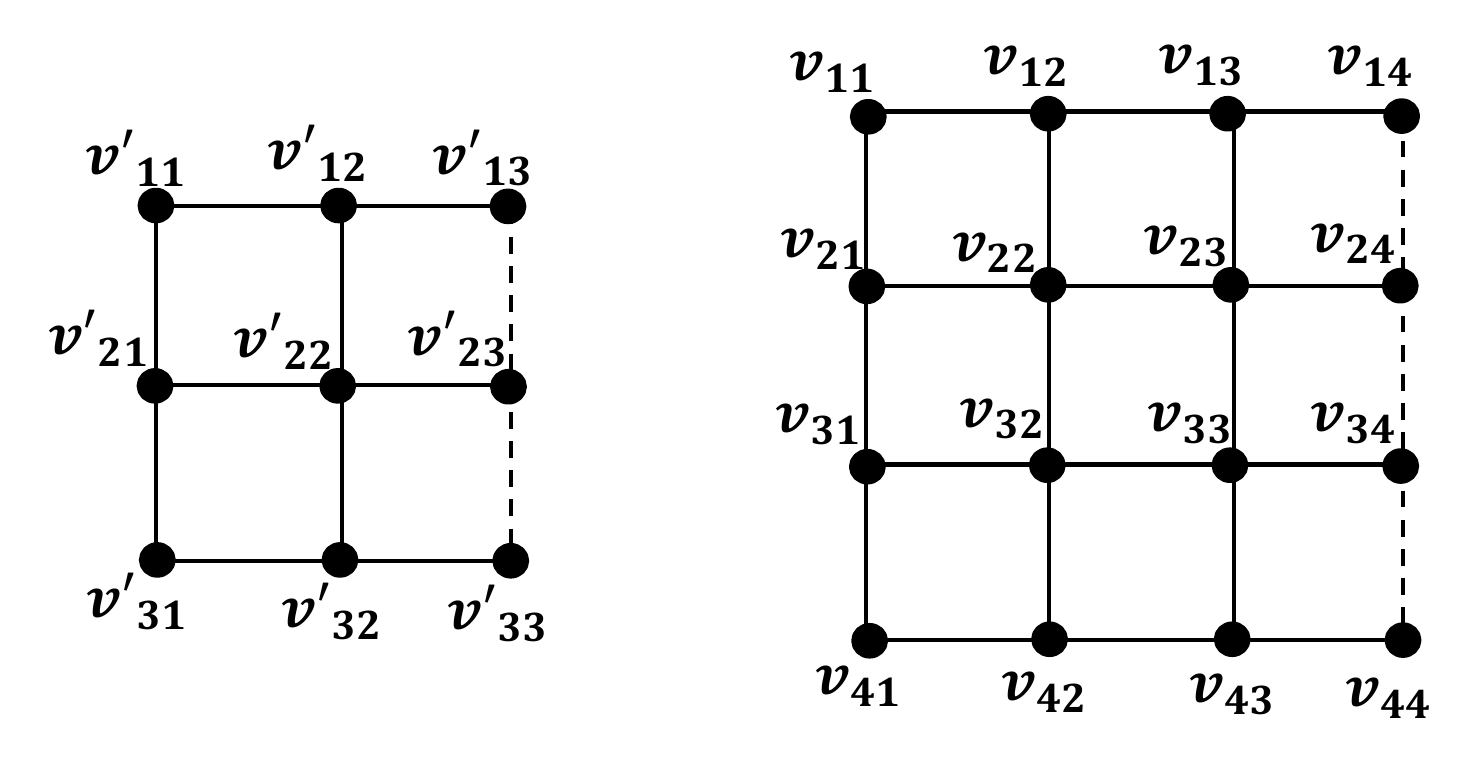}
    \caption{The hypergraph representations for the Magic Square game $G_3$ and its SDE game $G_4$. The solid lines correspond to constraint equations with parity $+1$ and the dashed lines represent constraint equations with parity $-1$.}
    \label{MS}
\end{figure}

As explained in the main text, the SDE game $G_4$ is played as follows. In each round, the referee uniformly at random selects one row $r_x,x\in\{1,\ldots,4\}$ for Alice that $r_x=\{v_{x1},v_{x2},v_{x3},v_{x4}\}$ , and one column $c_y$ for Bob that $c_y=\{v_{1y},v_{2y},v_{3y},v_{4y}\}$ (see the right of Fig.~\ref{MS} for graph illustration). 
Alice and Bob must output assignments in $\{\pm 1\}$ for all four vertices of their respective row or column, subject to the parity condition.  Denoting Alice’s output by $a$, and Bob’s output by $b$. Then $a,b\in\{(+1+1+1+1),\ldots,(-1-1+1+1)\}$ (the 8 parity $+1$ strings) except in the case when $y=4$, for which the output set is $\{(+1+1+1-1),\ldots,(-1-1-1+1)\}$ (the 8 parity $-1$ strings).
The winning condition is that Alice and Bob's assignments for the common (intersecting) vertex to their input hyperedges be the same (the consistency condition) \cite{CM12, Ji13}. 
Let $a(v_{xy})$ (similarly $b(v_{xy})$) denote Alice's (similarly Bob's) answer for the intersection vertex $v_{xy}$ of row $r_x$ and column $c_y$, the winning consistency condition of the game is then written as
\begin{equation}
    W(a,b,x,y)=
    \begin{cases} 1& \text{if } a(v_{xy})=b(v_{xy});  \\ 
    0 & \text{otherwise.} \\ 
    \end{cases}
\end{equation}

We claim that the following correlation $\{ p_{G_4}(a,b|x,y) \}$ is a nonlocal quantum ECID correlation:
\begin{equation}
    p_{G_4}(a,b|x,y)=\begin{cases} \frac{1}{32}& \text{when } W(a,b,x,y)=1;  \\ 
    0 & \text{when } W(a,b,x,y)=0. \\ 
    \end{cases}
\end{equation}

First, this correlation is nonlocal, since it wins the SDE game $G_4$ with unit probability while the optimal classical winning probability is $15/16$, strictly smaller than $1$.
Next, we explain why this correlation exhibits the ECID phenomenon and is quantum-realizable. Fix any input pair $(\hat x,\hat y)$ in the SDE game $G_4$. There exists a family of $32$ perfect quantum strategies, each of which:
(i) assigns deterministic outputs at the chosen inputs $(\hat x,\hat y)$ (consistent with the parity and consistency constraints), and
(ii) on the remaining inputs implements the perfect quantum strategy for the reduced game that is isomorphic to the perfect quantum strategy for the Magic Square game $G_3$ (see Fig.~\ref{lift_PM_1} and Fig.~\ref{SDLMsqdecomposition}).
Moreover, the uniform convex combination of the corresponding $32$ correlations reproduces exactly the correlation $\{ p_{G_4}(a,b|x,y) \}$. 
In other words, for every chosen $(\hat x,\hat y)$, the same global correlation $\{ p_{G_4}(a,b|x,y) \}$ admits a decomposition into perfect quantum strategies that are deterministic on $(\hat x,\hat y)$. We detail this construction below.

Consider as an instance the case $(\hat x=1,\hat y=1)$ for which one such strategy is shown in Fig.~\ref{lift_PM_1}, we denote the corresponding correlation by $\left\{ p_{G_4}^{r_1,c_1,1} (a,b|x,y) \right\}$.
\begin{figure}[htbp]
    \centering
\includegraphics[width=0.2\linewidth]{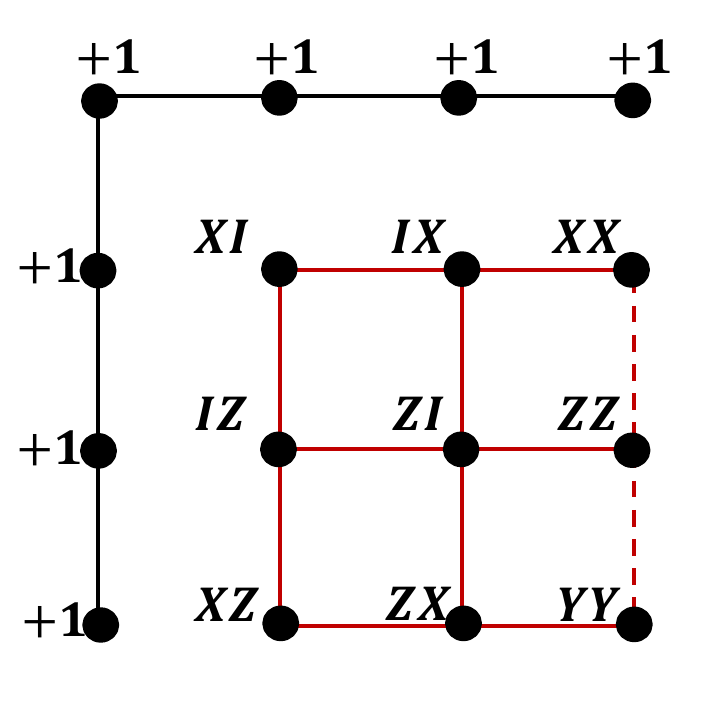}
    \caption{One perfect quantum strategy for $G_4$ that corresponds to the quantum correlation $\left\{ p_{G_4}^{r_1,c_1,1} \right\}$. The deterministic outcomes (consistent with the parity and consistency constraints) are assigned to inputs $(\hat x=1,\hat y=1)$ (hyperedges $r_1,c_1$) and a perfect quantum strategy for the $G_3$ is assigned to the reduced system (the red portion).}
    \label{lift_PM_1}
\end{figure}
We know that the nonlocal quantum correlation $\{p_{G_3}(a,b|x,y)\}$ for the Magic Square game $G_3$ has $p_{G_3}(a,b|x,y)=1/8$ when the winning condition holds and $p_{G_3}(a,b|x,y)=0$ when the winning condition doesn't hold.
Thus the correlation $\left\{ p_{G_4}^{r_1,c_1,1} (a,b|x,y) \right\}$ for the strategy shown in Fig.~\ref{lift_PM_1} is
\begin{equation}
    p_{G_4}^{r_1,c_1,1} (a,b|x,y)=\begin{cases} 1 & \text{when } x=1,y=1 \text{ and } a=(+1+1+1+1), b=(+1+1+1+1);  \\ 
    \frac{1}{4}& \text{when } \big(x=1,y\neq 1 \text{ and } b(v_{xy})=+1 \big)\text{ or } \big(x\neq 1,y=1 \text{ and } a(v_{xy})=+1 \big);  \\ 
    \frac{1}{8}& \text{when } x\neq 1,y\neq 1 \text{ and } a(v_{xy})=b(v_{xy});  \\ 
    0 & \text{otherwise }. \\ 
    \end{cases}
\end{equation}

Furthermore, for fixed $(\hat x=1,\hat y=1)$, i.e., hyperedges $r_1,c_1$, there are $\binom{6}{1} + \binom{6}{3} + \binom{6}{5} = 32$ subsystems corresponding to the Magic Square system, and for each of them, we have a perfect quantum winning strategy similar to Fig.~\ref{lift_PM_1} that assigns deterministic values (consistent with the parity and consistency constraints) to hyperedges $r_1,c_1$. In Fig.~\ref{SDLMsqdecomposition}, we list all $32$ such perfect quantum winning strategies. Denoting the corresponding nonlocal quantum correlations by $\left\{ p_{G_4}^{r_1,c_1,i} (a,b|x,y) \right\}$ that $i\in\{1,\ldots,32\}$, the correlation associated to the $i$-th strategy is
\begin{equation}
    p_{G_4}^{r_1,c_1,i} (a,b|x,y)=\begin{cases} 1 & \text{when } x=1,y=1 \text{ and } a=a_i,b=b_i;  \\ 
    \frac{1}{4}& \text{when } \big(x=1,y\neq 1 \text{ or } x\neq 1,y=1 \big) \text{ and } a(v_{xy})=b(v_{xy});  \\ 
    \frac{1}{8}& \text{when } x\neq 1,y\neq 1 \text{ and } a(v_{xy})=b(v_{xy});  \\ 
    0 & \text{otherwise }. \\ 
    \end{cases}
\end{equation}
where $a_i,b_i$ are deterministic values assigned to hyperedges $r_1,c_1$ in the $i$-th strategy. The behavior$\{ p_{G_4}(a,b|x,y) \}$ is then seen to admit the convex decomposition 
\begin{equation}
   \left\{ p_{G_4}(a,b|x,y) \right\} = \frac{1}{32} \sum_{i=1}^{32}\left\{ p_{G_4}^{r_1,c_1,i} (a,b|x,y) \right\}.
\end{equation}

 \begin{figure}[H]
    \centering
    \includegraphics[width=0.55\linewidth]{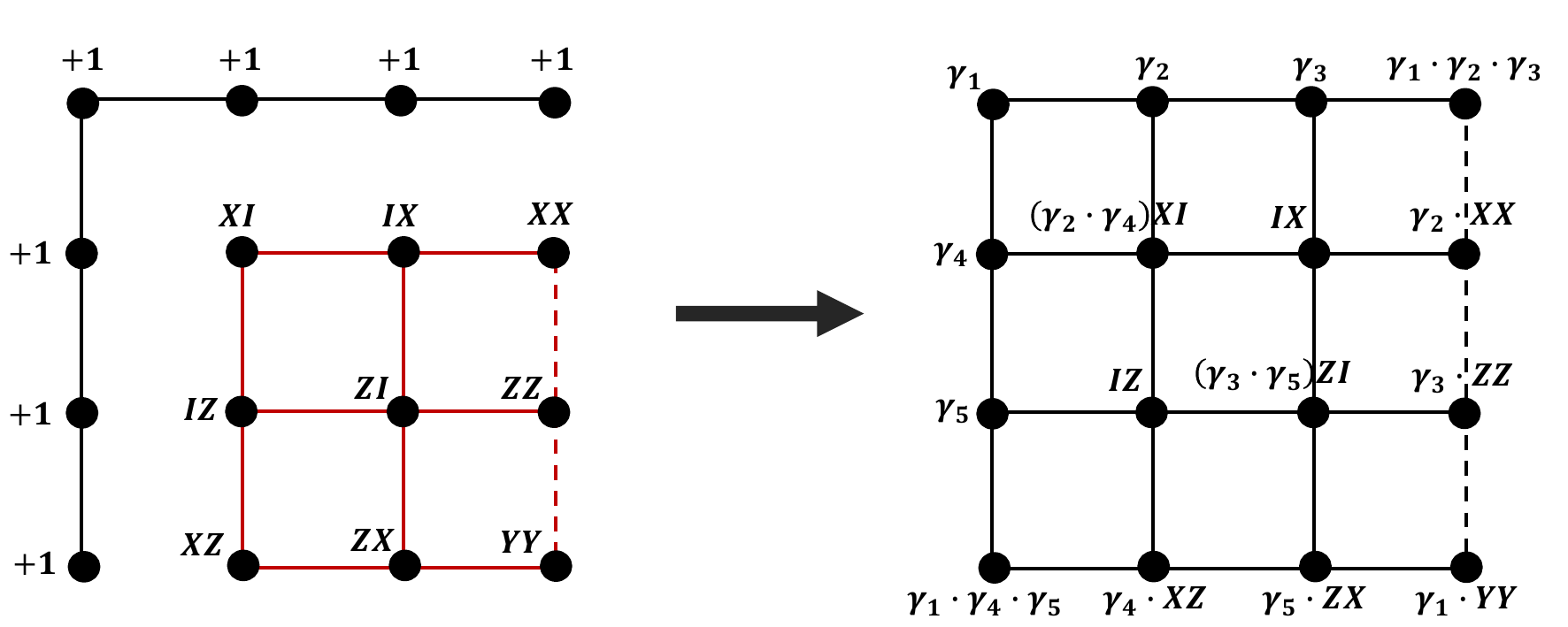}
    \caption{On the left is represented the partially deterministic strategy that assigns deterministic outcome $+1$ to all variables for input hyperedges $r_1, c_1$, the corresponding behavior $\left \{ p_{G_4}^{r_1,c_1,1} (a,b|x,y)  \right \}$ is described in the text. On the right is represented the set of $32$  partially deterministic strategies that output deterministic values for the variables in $r_1, c_1$. These $32$ strategies correspond to the outputs $\gamma_1, \gamma_2, \gamma_3, \gamma_4, \gamma_5 \in \{+1, -1\}$ for the vertices $v_{11}, v_{12}, v_{13}, v_{21}, v_{31}$ respectively. Note that the vertex $v_{14}$ is assigned value $\gamma_1 \cdot \gamma_2 \cdot \gamma_3$ to satisfy the parity $+1$ condition for row $1$. Similarly, the vertex $v_{41}$ is assigned value $\gamma_1 \cdot \gamma_4 \cdot \gamma_5$ to satisfy the parity $+1$ condition for column $1$. The outcomes for the other vertices in the hypergraph are assigned from the outputs of the labelled quantum measurements as shown. It is readily checked that the parity condition is satisfied for all rows (equal to $+1$) and all columns (equal to $+1$ for all column except the fourth column for which it is $-1$).}
    \label{SDLMsqdecomposition}
\end{figure}

From the Fig.\ref{SDLMsqdecomposition}, it is evident that a similar convex decomposition of $\left\{ p_{G_4}(a,b|x,y) \right\}$ holds for each choice of $r_{\hat x}, c_{\hat y}$ for any $\hat x, \hat y \in \{1,\ldots, 4\}$. Nevertheless, let us formalise this in the remaining portion of this section.

Let $S^{r_m,c_n,i}$ denote the hypergraph representation of the $i$-th strategy which assigns purely deterministic values to the outcomes for hyperedges $r_m,c_n$. The $32$ strategies $S^{r_1,c_1,i},i\in\{1,\ldots,32\}$ are shown in Fig.~\ref{SDLMsqdecomposition} on the right, which assign deterministic values to the outcomes for settings $r_1,c_1$ and operators corresponding to a perfect strategy for a Magic Square game $G_3$ for the reduced vertices in $\{r_2,r_3,r_4\},\{c_2,c_3,c_4\}$. We show how to construct an analogous set of $32$ strategies $S^{r_m,c_n,i}$ for any chosen pair $r_m, c_n$ for $m, n \in \{1,\ldots, 4\}$. That is, we detail the assignment (either of a deterministic value or a quantum operator) to the vertices in the hypergraph representation of strategy $S^{r_m,c_n,i}$. 

Each vertex in this hypergraph representation of strategy $S^{r_m,c_n,i}$ is uniquely identified as the intersection point of some row and column, i.e., $r_k, c_l$ for some $k, l \in \{1,\ldots, 4\}$, denoting its assignment by  $o^{S^{r_m,c_n,i}}(r_k,c_l)$, which is obtained from the assignment to $o^{S^{r_1,c_1,i}}(r_k,c_l)$ in the hypergraph strategy $S^{r_1,c_1,i}$ as
\begin{equation}
    o^{S^{r_m,c_n,i}}(r_k,c_l) :=\begin{cases}
    \text{par}(c_n)\cdot o^{S^{r_1,c_1,i}}(r_{\pi(k)},c_{\sigma(l)})) & \text{if } k=1,l=1 \text{ or } k=m,l=n;\\
    o^{S^{r_1,c_1,i}}(r_{\pi(k)},c_{\sigma(l)}) & \text{otherwise.}
    \end{cases}
\end{equation}
where $\pi(m)=1,\pi(1)=m,$ and $\pi(i)=i,\forall i\neq 1,m$, $\sigma(n)=1,\sigma(1)=n$ and $\sigma(i)=i,\forall i\neq 1,n$. And $\text{par}(c_1)=\text{par}(c_2)=\text{par}(c_3)=1$ and $\text{par}(c_4)=-1$.

Thus the behavior $\{p_{G_4}^{r_m,c_n,i}(a,b|x,y)\}$ corresponding to the hypergraph strategy $S^{r_m,c_n,i}$ for $m,n\in\{1,\ldots,4\},i\in\{1,\ldots,32\}$ is given by
\begin{equation}
    p_{G_4}^{r_m,c_n,i}(a,b|x,y) :=\begin{cases} 1 & \text{when } x=m,y=n \text{ and } a=a_{S^{r_m,c_n,i}},b=b_{S^{r_m,c_n,i}};  \\ 
    \frac{1}{4}& \text{when } \big(x=m,y\neq n \text{ or } x\neq m,y=n \big) \text{ and } a(v_{xy})=b(v_{xy});  \\ 
    \frac{1}{8}& \text{when } x\neq m,y\neq n \text{ and } a(v_{xy})=b(v_{xy});  \\ 
    0 & \text{otherwise }. \\ 
    \end{cases}
\end{equation}
where $a_{S^{r_m,c_n,i}},b_{S^{r_m,c_n,i}}$ are the deterministic values assigned to hyperedges $r_m,c_n$ in the $i$-th strategy $S^{r_m,c_n,i}$. Since the partially deterministic behaviors are obtained by simple permutations, the correlation $\{p_{G_4}(a,b|x,y)\}$ is seen to admit the convex decomposition for any $m, n \in \{1, \ldots, 4\}$
\begin{equation}
   \left\{ p_{G_4}(a,b|x,y) \right\} = \frac{1}{32} \sum_{i=1}^{32}\left\{ p_{G_4}^{r_m,c_n,i} (a,b|x,y) \right\}.
\end{equation}

\end{document}